\tikzset{->-/.style={decoration={
      markings,
      mark=at position #1 with {\arrow{>}}},postaction={decorate}}}
\theoremstyle{plain}
\newtheorem{theorem}{Theorem}
\newtheorem{lemma}[theorem]{Lemma}
\newtheorem{proposition}[theorem]{Proposition}
\newtheorem{corollary}[theorem]{Corollary}
\theoremstyle{definition}
\newtheorem{definition}[theorem]{Definition}
\newtheorem{example}[theorem]{Example}
\newtheorem{problem}[theorem]{Open Problem}
\theoremstyle{remark}
\newtheorem{remark}[theorem]{Remark}
\newcommand{\inverse}[1]{\mkern 1.5mu\overline{\mkern-1.5mu#1\mkern-1.5mu}\mkern 1.5mu}
\newlength{\edgelength}
\newcommand{\trans}[4]{%
  \begin{tikzpicture}[auto, shorten >=1pt, >=latex, baseline=(l.base), inner sep=0pt, outer xsep=0.3333em]
    \node (l) {\ensuremath{#1}};%
    \setlength{\edgelength}{\widthof{\scriptsize\ensuremath{#2/#3}}+0.5cm}%
    \node[base right=\edgelength of l] (r) {\ensuremath{#4}};%
    \path[->] (l.mid east) edge node[inner sep=0pt] {\scriptsize\ensuremath{#2/#3}} (r.mid west);%
  \end{tikzpicture}%
}
\DeclareMathOperator{\dom}{dom}
\DeclareMathOperator{\im}{im}
\newcommand{\revbin}{\overleftarrow{\operatorname{bin}}}
\newcommand{\wt}{\widetilde}
\newcommand*{\ComplexityClass}[1]{\textsc{#1}}
\newcommand*{\PSPACE}{\ComplexityClass{PSpace}}
\author{Daniele D'Angeli\thanks{The first author was supported by the Austrian Science Fund project FWF P24028-N18 and FWF P29355-N35.}}
\affil{Università degli Studi Niccolò Cusano\\
  Via Don Carlo Gnocchi, 3\\
  00166 Roma, Italy}
\author{Emanuele Rodaro}
\affil{Department of Mathematics\\
  Politecnico di Milano\\
  Piazza Leonardo da Vinci, 32\\
  20133 Milano, Italy}
\author{Jan Philipp Wächter}
\affil{Institut für Formale Methoden der Informatik (FMI)\\
  Universität Stuttgart\\
  Universitätsstraße 38\\
  70569 Stuttgart, Germany}
\title{On the Structure Theory of\\Partial Automaton Semigroups}
\begin{document}
  \maketitle
  \vspace*{-2.5\baselineskip}
  \begin{abstract}
    We study automaton structures, i.\,e.\ groups, monoids and semigroups generated by an automaton, which, in this context, means a deterministic finite-state letter-to-letter transducer. Instead of considering only complete automata, we specifically investigate semigroups generated by partial automata. First, we show that the class of semigroups generated by partial automata coincides with the class of semigroups generated by complete automata if and only if the latter class is closed under removing a previously adjoined zero, which is an open problem in (complete) automaton semigroup theory stated by Cain. Then, we show that no semidirect product (and, thus, also no direct product) of an arbitrary semigroup with a (non-trivial) subsemigroup of the free monogenic semigroup is an automaton semigroup. Finally, we concentrate on inverse semigroups generated by invertible but partial automata, which we call automaton-inverse semigroups, and show that any inverse automaton semigroup can be generated by such an automaton (showing that automaton-inverse semigroups and inverse automaton semigroups coincide).
    \textbf{Keywords.} Automaton Semigroup, Partial Automaton, Inverse Semigroup, Preston-Vagner Theorem
  \end{abstract}

  \begin{section}{Introduction}\enlargethispage{2\baselineskip}
    Automaton groups have proven to be a valuable source for groups with interesting properties. Probably, the most prominent and influential example among these groups with interesting properties is Grigorchuk's group: it is the historical first example of a group with super-polynomial but sub-exponential (i.\,e.\ \emph{intermediate}) growth (see \cite{grigorchuk2008groups} for an accessible introduction), which answered a question by Milnor about the existence of such groups \cite[Problem 5603]{milnor1968problem}. At the same time, it is also an infinite, finitely generated group in which every element has torsion (and, thus, a solution for Burnside's problem).
    
    Motivated by the fact that many groups with interesting properties arise as automaton groups, the class of automaton groups itself gained some interest as an object of study. One aspect of this study is the decidability of algorithmic questions concerning automaton groups. For example, {\v{S}}uni{\'c} and Ventura have shown that there is an automaton group with undecidable conjugacy problem \cite{Su-Ve09}. Other problems remain still open, however. An example for this is the finiteness problem for automaton groups: given an automaton, is its generated group finite? It is neither known to be decidable nor known to be undecidable. On the other hand, the same problem for automaton \emph{semi}groups has been proven to be undecidable by Gillibert \cite{Gilbert13}. As a consequence of this proof, he also obtained that a variation of the (uniform) order problem for automaton semigroups -- namely, given an automaton semigroup and two elements $s$ and $t$ of the semigroup, is there an $n$ such that $s^n = t$? -- is undecidable. Only later, Gillibert proved that there is an automaton group with an undecidable order problem -- given an element $g$ of a (fixed) automaton group, is there some $n$ such that $g^n = 1$? \cite{gillibert2017automaton}.
    
    There are other examples where results for automaton semigroups could be obtained earlier than the analogous result for automaton groups or where they even helped in solving the group case. One such example is the complexity of the word problem: Steinberg conjectured that there is an automaton group with a $\PSPACE$-complete word problem \cite{steinberg2015some}. Until very recently, this problem remained open for groups but it had been shown that there is an automaton semigroup whose word problem is $\PSPACE$-complete \cite[Proposition~6]{DAngeli2017}.
    
    In the light of such algorithmic results, it seems reasonable to study the class of automaton semigroups also from a semigroup theoretical point of view. This was initiated in the work of Cain \cite{cain2009automaton} and Brough and Cain \cite{brough2015automaton, brough2017automatonTCS}. The aim of this paper is to contribute to this theory. However, we consider a different setting: while most authors define automaton structures to be generated by \emph{complete} automata, we will investigate the situation for partial automata. This difference is mainly motivated by the fact that invertible, partial automata admit a natural presentation of inverse semigroups. This yields automaton-inverse semigroups as an intermediate level between automaton semigroups and automaton groups. Besides that they appear as a natural class of automaton structures, automaton-inverse semigroups also turned out to be useful, for example, in the study of the above mentioned word problem: in fact, it was not only known that there is an automaton semigroup with a $\PSPACE$-complete word problem but the semigroup was even an automaton-inverse semigroup (which, as such, is in particular also an inverse automaton semigroup) \cite[Proposition~6]{DAngeli2017}. The construction for this preliminary result was essential for the recent proof that there is an automaton group with a $\PSPACE$-complete word problem \cite[Theorem~10]{pspacePart}. Yet, semigroups generated by partial automata are not only interesting because of automaton-inverse semigroups: for example, when allowing partial automata, one can extend the undecidability result for the finiteness problem for automaton semigroups to the finiteness problem for automaton semigroups generated by invertible, bi-reversible automata \cite[Theorem~3]{decidabilityPart}.\enlargethispage{1.5\baselineskip}
    
    On the other hand, results holding for complete automata often also hold in the partial setting. For example, with Francoeur, the current authors could show that automaton semigroups (and, thus, automaton groups) are infinite if and only if they admit an $\omega$-word with an infinite orbit \cite{dangeli2019orbits}. Interestingly, this result also holds in the partial setting although, here, orbits of words cannot only increase but also shrink if a suffix is appended. This kind of orbital growth (also for certain complete automata) was studied in \cite{expandabilityPart}.
    
    Still, it seems that automaton semigroups generated by partial automata and inverse automaton semigroups have not been widely studied yet. Some work does exist, however: for example, Olijnyk, Sushchansky and Słupik studied partial automaton permutations \cite{olijnyk2010inverse}\footnote{Note, however, that their paper considers slightly different inverse semigroups: they are generated by arbitrary subsets of states of an automaton, while, in this paper, the inverse semigroups are always generated by all states.}\enlargethispage{3\baselineskip}; another example for work mentioning self-similar inverse semigroups and inverse automaton semigroups is a paper by Nekrashevych \cite{nekrashevych2006self}.
    
    The most obvious question about semigroups generated by partial automata is whether this class is different from the class of semigroups generated by complete automata. This is where we will start our discussion, after presenting some preliminaries and examples in the first section. While we will not answer the question completely, we will show that it is equivalent to another open problem for (complete) automaton semigroup theory asked by Cain \cite[Open problem~5.3]{cain2009automaton}. Afterwards, we will show that direct products and semidirect products involving the free monogenic semigroup are not automaton semigroups. This generalizes the result of Brough and Cain that (non-trivial) subsemigroups of the monogenic free semigroup are not automaton semigroups \cite[Theorem~15]{brough2017automatonTCS} (see also \cite[Proposition~4.3]{cain2009automaton}) and extends the small number of known examples of semigroups which are not automaton semigroups (for reasons other than not having a property common to all automaton semigroups such as being finitely generated, residually finite \cite[Proposition~3.2]{cain2009automaton} or having a decidable word problem). Finally, we will introduce automaton-inverse semigroups, which are semigroups generated by invertible, yet partial automata. The main result of this section is that every inverse semigroup -- i.\,e.\ a semigroup generated by an arbitrary (deterministic) automaton which happens to be an inverse semigroup -- can be presented as an automaton-inverse semigroup, which shows that the two notions coincide.
  \end{section}

  \begin{section}{Automaton Semigroups}
    \paragraph*{Words, Letters and Alphabets.}
    An \emph{alphabet} $\Sigma$ is a non-empty finite set, whose elements are called \emph{letters}. A finite sequence $w = a_1 \dots a_n$ of letters $a_1, \dots, a_n \in \Sigma$ is a \emph{word} over $\Sigma$ of \emph{length} $|w| = n$. The word of length $0$, the \emph{empty} word, is denoted by $\varepsilon$. The set of all words is $\Sigma^*$ and $\Sigma^+$ is the set $\Sigma^* \setminus \{ \varepsilon \}$.
    
    \paragraph*{Partial Functions and Sets.}
    For two sets $A$ and $B$, we use $A \sqcup B$ to denote their disjoint union. A \emph{partial} function $f$ from $A$ to $B$ is written as $f: A \to_p B$. Its \emph{domain} is the set $\dom f = \{ a \mid f \text{ is defined on } a \}$ and its \emph{image} is $\im f = \{ f(a) \mid f \text{ is defined on } a \}$.
    
    \paragraph*{Semigroups and Monoids.}
    In this paper, we need some basic notions from semigroup theory. For example, a \emph{zero} of a semigroup $S$ is an element $z \in S$ with $zs = sz = z$ for all $s \in S$; an identity is an element $e \in S$ with $es = se = s$ for all $s \in S$. Whenever a semigroup contains a zero (an identity), it is unique. To any semigroup $S$, we can adjoin a zero, obtaining the semigroup $S^0$. Similarly, we adjoin an identity and obtain the semigroup $S^1$.
    
    A semigroup is \emph{free} of \emph{rank} $i$ if it isomorphic to $\{ q_1, \dots, q_i \}^+$ (which forms a semigroup whose operation is concatenation). An alternative representation of the free semigroup in one generator $\{ q \}^+$, or $q^+$ for short, is the set of natural numbers excluding $0$ with addition as operation. If we adjoin a zero element to $q^+$, then we can represent this as $\infty$ in additive notation: we have $i + \infty = \infty + i = \infty + \infty = \infty$.
    In addition to free semigroups, we will also encounter the \emph{free monoid} of \emph{rank} $i$: this is $\{ q_1, \dots, q_i \}^*$ or, as an alternative in the case of the free monogenic monoid $q^*$, the set of natural numbers including $0$ with (again) addition as operation.
    
    \paragraph{Automata.}
    As is common in the area of automaton semigroups and groups, we use the term automaton to refer to a transducer (i.\,e.\ an automaton with output) instead of an acceptor. Therefore, in this paper, an \emph{automaton} $\mathcal{T}$ is a triple $(Q, \Sigma, \delta)$ of a finite set $Q$ of \emph{states}, an alphabet $\Sigma$ and a \emph{transition} relation $\delta \subseteq Q \times \Sigma \times \Sigma \times Q$. Instead of writing $(q, a, b, p)$ for a transition, we use the graphical notation $\trans{q}{a}{b}{p}$ or
    \begin{center}
      \begin{tikzpicture}[baseline=(q.base), auto, shorten >=1pt, >=latex]
        \node[state] (q) {$q$};
        \node[state, right=of q] (p) {$p$};
        \path[->] (q) edge node {$a/b$} (p);
      \end{tikzpicture}
    \end{center}
    in graphical representations. A \emph{run} of the automaton \emph{from} state $q_0 \in Q$ \emph{to} state $q_n \in Q$ with \emph{input} $u = a_1 \dots a_n$ and \emph{output} $v = b_1 \dots b_n$ is a sequence
    \begin{center}
      \begin{tikzpicture}[auto, shorten >=1pt, >=latex, baseline=(l.base), inner sep=0pt, outer xsep=0.3333em]%
        \setlength{\edgelength}{\widthof{\scriptsize$a_n/b_n$}+0.5cm}%
        \node (q0) {$q_0$};%
        \node[base right=\edgelength of q0] (q1) {$q_1$};%
        \node[base right=\edgelength of q1] (dots) {$\dots$};%
        \node[base right=\edgelength of dots] (qn) {$q_n$};%
        \path[->] (q0.mid east) edge node[inner sep=0pt] {\scriptsize$a_1/b_1$} (q1.mid west);%
        \path[->] (q1.mid east) edge node[inner sep=0pt] {\scriptsize$a_2/b_2$} (dots.mid west);%
        \path[->] (dots.mid east) edge node[inner sep=0pt] {\scriptsize$a_n/b_n$} (qn.mid west);%
      \end{tikzpicture}
    \end{center}
    of transitions $\trans{q_{i - 1}}{a_i}{b_i}{q_i} \in \delta$ with $i \in \{ 1, \dots, n \}$. An automaton is \emph{deterministic} if all sets $\{ \trans{q}{a}{b}{p} \mid \trans{q}{a}{b}{p} \in \delta, b \in \Sigma, p \in Q \}$ with $q \in Q$ and $a \in \Sigma$ contain at most one element. It is \emph{complete} if all the sets contain at least one element. Notice that, in a deterministic automaton, the start state and the input word uniquely determine a run (if it exists). Similarly, in a complete automaton, we have a run starting in state $q$ with input $u$ for every $q \in Q$ and $u \in \Sigma^*$.
    
    \paragraph*{Automaton Semigroups.}
    In a deterministic automaton $\mathcal{T} = (Q, \Sigma, \delta)$, every state $q \in Q$ induces a partial function $q \circ{}\!: \Sigma^* \to_p \Sigma^*$. The value $q \circ u$ of $q \circ{}\!$ on an word $u \in \Sigma^*$ is the output word of the unique if existing run starting in $q$ with input $u$. If no such run exists, then $q \circ{}\!$ is undefined on $u$. For $\bm{q} = q_n \dots q_1$ with $q_1, \dots, q_n \in Q$, we define $\bm{q} \circ{}\!: \Sigma^* \to_p \Sigma^*$ as the composition of the maps $q_n \circ{}\!, \dots, q_1 \circ{}\!$: $\bm{q} \circ u = q_n \dots q_1 \circ u = q_n \circ \dots \circ q_1 \circ u$. By definition, the maps $\bm{q} \circ{}\!$ are length-preserving and prefix-compatible in the sense that $\bm{q} \circ u_1$ and $\bm{q} \circ u_2$ share a common prefix which is at least as long as the common prefix of $u_1$ and $u_2$ for $u_1, u_2 \in \Sigma^*$.
    
    Similarly, we can define the partial functions $\!{}\cdot u: Q \to_p Q$ for $u \in \Sigma^*$. The value $q \cdot u$ of $\!{}\cdot u$ on $q \in Q$ is the state in which the unique (if existing) run starting in $q$ with input $u$ ends. If no such run exists, then $\!{}\cdot u$ is undefined on $q$. Notice that we have $q \cdot u_1 u_2 = (q \cdot u_1) \cdot u_2$ for all $u_1, u_2 \in \Sigma^*$. We can extend the partial map $\!{}\cdot u: Q \to_p Q$ into a partial map $\!{}\cdot u: Q^* \to_p Q^*$ by defining $\varepsilon \cdot u = \varepsilon$ and $\bm{q} p \cdot u = [\bm{q} \cdot (p \circ u)] (p \cdot u)$ inductively for $\bm{q} \in Q^*$ and $p \in Q$.
    
    The closure $Q^+ \circ{}\!$ of the partial functions $q \circ{}\!$ with $q \in Q$ under composition is a semigroup, which we call the semigroup \emph{generated by $\mathcal{T}$} and which we denote by $\mathscr{S}(\mathcal{T})$. To emphasize the fact that they generate semigroups, we use the term \emph{$S$-automaton} for a deterministic automaton. A semigroup is called an \emph{automaton semigroup} if it is generated by some $S$-automaton.
    
    Notice that, for a complete $S$-automaton, the partial functions $\bm{q} \circ{}\!$ with $\bm{q} \in Q^*$ and $\!{}\cdot u$ with $u \in \Sigma^*$ are in fact total functions $\bm{q} \circ{}\!: \Sigma^* \to \Sigma^*$ and ${}\!\cdot u: Q^* \to Q^*$. If this is the case, we call the semigroup generated by $\mathcal{T}$ a \emph{complete} automaton semigroup.
    
    \begin{remark}
      In other works, the term \emph{automaton semigroup} is usually used for what we call \emph{complete} automaton semigroups. This is in particular true for the work of Cain \cite{cain2009automaton} and Brough and Cain \cite{brough2015automaton, brough2017automatonTCS} to which we often refer in this paper.
    \end{remark}
    
    \begin{example}\label{ex:addingMachineSemigroup}
      The most common example of an automaton semigroup is probably the semigroup generated by the \emph{adding machine}
      \begin{center}
        \begin{tikzpicture}[auto, shorten >=1pt, >=latex, baseline=(+1.base)]
          \node[state] (+1) {$+1$};
          \node[state, right=of +1] (+0) {$+0$};
        
          \path[->] (+1) edge[loop left] node {$1/0$} (+1)
                         edge node {$0/1$} (+0)
                    (+0) edge[loop right] node[align=left] {$0/0$\\$1/1$} (+0)
          ;
        \end{tikzpicture}.
      \end{center}
      Clearly, the map $+0 \circ{}\!$ is the identity on $\Sigma^*$. The map $+1 \circ{}\!$ is probably best understood by applying it to $000$: $+1 \circ 000 = 100$; applying it a second and a third time yields $+1 \circ 100 = 010$ and $+1 \circ 010 = 110$. Doing this, one notices that the action of $+1$ is to add $1$ to a word $u \in \{ 0, 1 \}^*$ which is seen as the reverse/least significant bit first binary representation of a natural number. Thus, the semigroup generated by the adding machine is isomorphic to the monoid of natural numbers (including zero) with addition. Stating this differently, the adding machine generates a free monoid of rank $1$.
    \end{example}
    At this point, it might be interesting to know that the free \emph{semigroup} of rank $1$ is not a complete automaton semigroup \cite[Proposition 4.3]{cain2009automaton}. We will see later in this paper that it is no (partial) automaton semigroup either. On the other hand, however, the free semigroups of rank at least $2$ are (complete) automaton semigroups \cite[Proposition 4.1]{cain2009automaton}.
    \begin{example}
      The $S$-automaton
      \begin{center}
        \begin{tikzpicture}[auto, shorten >=1pt, >=latex]
          \node[state] (qa) {$q_a$};
          \node[state, below left=2cm and 1cm of qa] (qb) {$q_b$};
          \node[state, below right=2cm and 1cm of qa] (qc) {$q_c$};
          
          \path[->] (qa) edge[loop above] node {$a/a$} (qa)
                         edge[bend right] node[swap] {$b/a$} (qb)
                         edge node[swap] {$c/a$} (qc)
                    (qb) edge[out=240, in=210, looseness=8] node {$b/b$} (qb)
                         edge node[swap] {$a/b$} (qa)
                         edge[bend right] node[swap] {$c/b$} (qc)
                    (qc) edge[out=330, in=300, looseness=8] node {$c/c$} (qc)
                         edge[bend right] node[swap] {$a/c$} (qa)
                         edge node[swap] {$b/c$} (qb)
          ;
        \end{tikzpicture}
      \end{center}
      generates a free semigroup of rank $3$. The idea is that $q_a \circ{}\!$ turns a word $u$ into $au$ (strictly speaking, the last letter of $u$ is dropped). Thus, $\bm{q} \circ a^n$ will be different from $\bm{p} \circ a^n$ for $n$ large enough and $\bm{q}, \bm{p} \in Q^+$ with $\bm{q} \neq \bm{p}$ (for a more formal proof, see \cite[Proposition 4.1]{cain2009automaton}).
    \end{example}
    The two examples we have seen so far were both generated by complete automata. In the next example, we will encounter a non-complete automaton.
    \begin{example}\label{ex:B2semigroup}
      Consider the automaton
      \begin{center}
        \begin{tikzpicture}[auto, shorten >=1pt, >=latex, baseline=(p.base)]
          \node[state] (p) {$p$};
          \node[state, right=of p] (q) {$q$};
          
          \path[->] (p) edge[loop left] node {$b/a$} (p)
                    (q) edge[loop right] node {$a/b$} (q);
        \end{tikzpicture}.
      \end{center}
      The semigroup generated by it contains the functions
      \begin{alignat*}{4}
        pqp \circ{}\! = p \circ{}\!:{}& b^n \mapsto a^n & \hspace*{2cm} &&
        qpq \circ{}\! = q \circ{}\!:{}& a^n \mapsto b^n\\
        qp \circ{}\!:{}& b^n \mapsto b^n & \hspace*{2cm} &&
        pq \circ{}\!:{}& a^n \mapsto a^n
      \end{alignat*}
      as well as $\bot$, the function which is undefined on all words except the empty word. As we have $q^2 \circ{}\! = p^2 \circ{}\! = \bot$ and that $\bot$ is a zero of the semigroup, this yields that the automaton generates the (finite) Brandt semigroup $B_2$ \cite[p.~32]{howie}\footnote{For readers familiar with syntactic semigroups: $B_2$ is the syntactic semigroup of $(qp)^+$.}.
      
      Since $B_2$ is finite, it is also a complete automaton semigroup \cite[Proposition~4.6]{cain2009automaton}. The idea is to let $B_2$ act on itself by left multiplication, which results in the complete $S$-automaton
      \begin{center}
        \begin{tikzpicture}[auto, shorten >=1pt, >=latex]
          \node[state] (q) {$p$};
          \node[state, right=of q] (p) {$q$};

          \path[->] (q) edge[loop left] node {
            $\begin{aligned}
              a &/ aa \\
              b &/ ab \\
              ab &/ aa \\
              ba &/ a \\
              aa &/ aa
            \end{aligned}$} (p);
          \path[->] (p) edge[loop right] node {
            $\begin{aligned}
              a &/ ba \\
              b &/ aa \\
              ab &/ b \\
              ba &/ aa \\
              aa &/ aa
            \end{aligned}$} (q);
        \end{tikzpicture}
      \end{center}
      with alphabet $\{ a, b, ab, ba, aa \}$ generating $B_2$.
    \end{example}
    
    Later, in \autoref{sec:inverseSemigroups}, we will encounter another example of a (truly) partial automaton generating the free inverse monoid of rank $1$.
     
    \paragraph*{Union and Power Automata.}
    For two automata $\mathcal{T}_1 = (Q_1, \Sigma_1, \delta_1)$ and $\mathcal{T}_2 = (Q_2, \Sigma_2, \delta_2)$, we can define the \emph{union} automaton $\mathcal{T}_1 \cup \mathcal{T}_2 = (Q_1 \cup Q_2, \Sigma_1 \cup \Sigma_2, \delta_1 \cup \delta_2)$. Notice that, for two deterministic automata $\mathcal{T}_1$ and $\mathcal{T}_2$ with disjoint state sets, the union automaton $\mathcal{T}_1 \cup \mathcal{T}_2$ is deterministic as well. Similarly, the union of two complete automata with the same alphabet is complete again.
    
    Another construction for automata is their \emph{composition}: the composition of two automata $\mathcal{T}_2 = (Q_2, \Sigma, \delta_2)$ and $\mathcal{T}_1 = (Q_1, \Sigma, \delta_1)$ over the same alphabet $\Sigma$ is the automaton $\mathcal{T}_2 \circ \mathcal{T}_1 = (Q_2 \circ Q_1, \Sigma, \delta_2 \circ \delta_1)$ where $Q_2 \circ Q_1 = \{ q_2 \circ q_1 \mid q_1 \in Q_1, q_2 \in Q_2 \}$ is the Cartesian product of $Q_2$ and $Q_1$ and the transitions are given by
    \[
      \delta_2 \circ \delta_1 = \{ \trans{q_2 \circ q_1}{a}{c}{p_2 \circ p_1} \mid \trans{q_1}{a}{b}{p_1} \in \delta_1, \trans{q_2}{b}{c}{p_2} \in \delta_2, b \in \Sigma \} \text{.}
    \]
    Again, the composition of two deterministic (complete) automata is deterministic (complete) as well. Notice that, by construction, the composition of $q_2 \circ{}\!$ and $q_1 \circ{}\!$ is indeed $q_2 \circ q_1 \circ{}\!$, the partial function induced by the state $q_2 \circ q_1$ in the composition automaton.
    
    A special case of the composition of automata is taking the $k$-th power $\mathcal{T}^k$ of an automaton $\mathcal{T} = (Q, \Sigma, \delta)$:
    \[
      \mathcal{T}^k = \underbrace{\mathcal{T} \circ \dots \circ \mathcal{T}}_{k \text{ times}} \text{.}
    \]
    Notice that the states of $\mathcal{T}^k$ correspond to $Q^k \circ{}\!$, the semigroup elements of length $k$ from $\mathscr{S}(\mathcal{T})$.
  \end{section}
  \begin{section}{Complete and Partial Automaton Semigroups}
    Introducing automaton semigroups by means of partial automata immediately raises an obvious question: do the classes of (partial) automaton semigroups and of complete automaton semigroups coincide? Since every complete automaton semigroup is a (partial) automaton semigroups, we can state the problem more precisely:

    \begin{problem}\label{prob:CompleteAndPartial}
      Is every (partial) automaton semigroup a complete automaton semigroup?
    \end{problem}
    
    For this problem to have a positive answer, one needs a method to complete a partial automaton without changing the generated semigroup. The typical way of completing an automaton is to add a sink state and to have all previously undefined transitions go to this new state. This was the approach taken in the proof of \cite[Proposition~1]{DAngeli2017}. There, for every (partial) $S$-automaton $\mathcal{T} = (Q, \Sigma, \delta)$, a complete $S$-automaton $\widehat{\mathcal{T}} = (\widehat{Q} \sqcup \{ 0 \}, \Sigma \sqcup \{ \bot \}, \widehat{\delta})$ was defined with $\widehat{Q}$ a disjoint copy of $Q$, $0$ a new state and $\bot$ a new letter. The transitions were given by
    \begin{align*}
      \widehat{\delta} ={}& \{ \trans{\widehat{p}}{a}{b}{\widehat{q}} \mid \trans{p}{a}{b}{q} \in \delta \} \cup{}\\
      & \{ \trans{\widehat{p}}{a}{\bot}{{0}} \mid {}\! \cdot a \text{ is undefined on } p, p \in Q, a \in \Sigma \} \cup{} \\
      & \{ \trans{0}{a}{\bot}{0} \mid a \in \Sigma \sqcup \{ \bot \} \} \cup \{ \trans{\widehat{p}}{\bot}{\bot}{{0}} \mid p \in Q \} \text{.}
    \end{align*}
    Although adding the new state $0$ in this way seems to be adjoining a zero to the generated semigroup\footnote{In fact, this is what was stated in \cite[Proposition~1]{DAngeli2017}.}, this is not the case: if $\mathscr{S}(\mathcal{T})$ contains the partial function which is undefined on all words from $\Sigma^+$, then this will turn into a function mapping any word $w \in \left( \Sigma \sqcup \{ \bot \} \right)^*$ to $\bot^{|w|}$, which is the same function as $0 \circ{}\!$. Thus, $0 \circ{}\!$ is not a newly adjoined zero but already present in $\mathscr{S}(\mathcal{T})$. One can avoid this issue by adding another letter $\top$ to $\mathcal{T}$ along with loops $\trans{q}{\top}{\top}{q}$ at every state before completing it into $\widehat{\mathcal{T}}$ in the way stated above. Here, we will break this construction into two steps, however, because both of them are useful on their own. First, we will improve the construction to complete an $S$-automaton by using any zero already present in the semigroup:
    \begin{proposition}\label{prop:zeroImpliesComplete}
      Let $S$ be a (partial) automaton semigroup with a zero. Then, $S$ is a complete automaton semigroup.
    \end{proposition}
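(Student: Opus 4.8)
The plan is to start from a partial $S$-automaton $\mathcal{T} = (Q, \Sigma, \delta)$ generating $S$ and to complete it by adding a single sink state $0$ and a single fresh letter, say $\bot$, exactly as in the construction recalled above (but without adjoining a \emph{new} zero, since $S$ already has one). Concretely, I would set $\widehat{Q} = Q \sqcup \{0\}$ and $\widehat{\Sigma} = \Sigma \sqcup \{\bot\}$, add transitions $\trans{p}{a}{\bot}{0}$ for every $p \in Q$ and $a \in \Sigma$ on which $\!{}\cdot a$ is undefined, loops $\trans{0}{a}{\bot}{0}$ for every $a \in \widehat{\Sigma}$, and transitions $\trans{p}{\bot}{\bot}{0}$ for every $p \in Q$ (together with the original transitions). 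This $\widehat{\mathcal{T}}$ is clearly complete and deterministic, so it generates a complete automaton semigroup $\widehat{S} = \mathscr{S}(\widehat{\mathcal{T}})$.

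The heart of the argument is to show $\widehat{S} \cong S$. The state $0$ induces the function $0 \circ{}\!$ that sends every word $w \in \widehat{\Sigma}^*$ to $\bot^{|w|}$; this is plainly a zero of $\widehat{S}$. My goal is to exhibit an isomorphism $S \to \widehat{S}$, and the natural candidate sends each generator $q \circ{}\!$ of $S$ to $q \circ{}\!$ in $\widehat{S}$ (using the renamed copy of $q$). First I would check that this extends to a well-defined homomorphism. The key observation is that, on inputs from $\Sigma^*$, a run in $\widehat{\mathcal{T}}$ starting at a state of $Q$ either coincides with a run in $\mathcal{T}$ (as long as all transitions stay defined in $\mathcal{T}$) or else, at the first undefined step, falls irrevocably into the sink $0$ and thereafter emits only $\bot$'s. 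Thus for $\bm{q} \in Q^+$, the word $\bm{q} \circ w$ in $\widehat{\mathcal{T}}$ agrees with $\bm{q} \circ w$ in $\mathcal{T}$ whenever the latter is defined, and otherwise equals $\bot^{|w|}$ restricted appropriately. I would formalize this by induction on word length, tracking at each letter whether the run is still inside $Q$ or has been absorbed into $0$.

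The subtle point --- and the step I expect to be the main obstacle --- is to verify that the map $S \to \widehat{S}$ is \emph{injective}, i.\,e.\ that completing the automaton does not accidentally identify two distinct elements of $S$. This is where the hypothesis that $S$ has a zero is essential. If $\bm{q} \circ{}\!$ and $\bm{p} \circ{}\!$ are distinct in $S$, they differ on some input $w \in \Sigma^*$ on which at least one of them is defined. The danger is that the other is undefined on $w$, so that in $\widehat{S}$ both outputs might look like a string of $\bot$'s and thereby coincide. I would rule this out by using the zero of $S$: let $z \in S$ be the zero, realized as $\bm{z} \circ{}\!$ for some $\bm{z} \in Q^+$. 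For any word $v$ on which $\bm{z} \circ{}\!$ is defined, composing $\bm{q}$ and $\bm{p}$ with $\bm{z}$ on one side and comparing how the completed functions treat inputs built from such $v$ lets me separate the two elements while keeping the runs genuinely inside $Q$ (never entering the sink). The crucial fact to nail down is that the zero, being absorbing, guarantees enough "defined" inputs to distinguish any two distinct elements \emph{without} relying on $\bot$-outputs, so that distinctness in $S$ survives in $\widehat{S}$. Once injectivity and the homomorphism property are established, surjectivity onto $\widehat{S}$ follows because $0 \circ{}\!$, the only genuinely new state's function, already equals the image of the zero $z$ of $S$ (its $\bot$-padded behaviour being forced), so the generators of $\widehat{S}$ all lie in the image.
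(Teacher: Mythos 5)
There is a genuine gap, and it is precisely the pitfall the paper warns about in the paragraph preceding the proposition. Your completion routes all previously undefined transitions into a \emph{fresh} sink state $0$, whose induced function sends every $w$ to $\bot^{|w|}$. You then assert that this function \enquote{already equals the image of the zero $z$ of $S$, its $\bot$-padded behaviour being forced}. That is false in general: the zero of $S$ is some partial function $\bm{z} \circ{}\!$ which may well be defined on non-empty words of $\Sigma^+$, and on those words its image $\widehat{\bm{z}} \circ{}\!$ in the completed automaton still outputs letters of $\Sigma$, not $\bot$. Hence $0 \circ{}\! \neq \widehat{\bm{z}} \circ{}\!$, the element $\widehat{\bm{z}} \circ{}\!$ is no longer absorbing in $\mathscr{S}(\widehat{\mathcal{T}})$ (since $\widehat{\bm{z}} \, 0 \circ{}\! = 0 \circ{}\!$), and your construction produces $S^0$ rather than $S$ --- i.\,e.\ you have reproven \autoref{cor:SPartialImpliesS0Complete}, not the proposition. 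Concretely: over $\Sigma = \{a, b\}$ take a state $q$ with loops $a/a$ and $b/b$ (the identity) and a state $z$ with only the loop $a/a$ (the partial identity with domain $a^*$). Then $\mathscr{S}(\mathcal{T})$ is the two-element semilattice with zero $z \circ{}\!$, but your $\widehat{\mathcal{T}}$ generates a strictly larger semigroup in which $\widehat{q} \circ{}\!$, $\widehat{z} \circ{}\!$ and the new zero $0 \circ{}\!$ are pairwise distinct (e.\,g.\ their values on $ab$ are $ab$, $a\bot$ and $\bot\bot$). Your worry about injectivity, by contrast, is unfounded: if one of $\bm{q} \circ{}\!$, $\bm{p} \circ{}\!$ is defined on $w$ and the other is not, the completed functions are already distinguished by whether their output on $w$ contains $\bot$. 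So the zero of $S$ is not needed where you invoke it; it is needed exactly at the step you dismiss as forced.

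The paper's proof avoids the problem by never introducing a new state: it first arranges, by replacing $\mathcal{T}$ with $\bigcup_{i=1}^n \mathcal{T}^i$, that the zero of $S$ is realized by a single state $0 \in Q$, and then sends every previously undefined transition and every $\bot$-input into the copy $\widehat{0}$ of \emph{that} state. The bulk of the work then goes into verifying that $\widehat{0} \circ{}\!$ is still a zero of $\mathscr{S}(\widehat{\mathcal{T}})$ and that $\bm{q} \circ{}\! \mapsto \widehat{\bm{q}} \circ{}\!$ is well defined and injective --- nontrivial because after falling into $\widehat{0}$ a run keeps producing output governed by the original behaviour of the zero state rather than a constant $\bot$-stream. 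This is the modification your argument needs.
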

    \begin{proof}
      Let $S = \mathscr{S}(\mathcal{T})$ for a (partial) $S$-automaton $\mathcal{T} = (Q, \Sigma, \delta)$. We may assume that the zero in $S$ is a state $0 \in Q$ since, otherwise, the zero is a word in $Q^n$ for some $n \geq 1$ and we can replace $\mathcal{T}$ by $\bigcup_{i = 1}^n \mathcal{T}^i$, which contains the zero as a state and still generates $S$.

      We construct a new automaton $\widehat{\mathcal{T}} = (\widehat{Q}, \widehat{\Sigma}, \widehat{\delta})$ whose state set $\widehat{Q}$ is a disjoint copy of $Q$. For any sequence of states $\bm{q} = q_n q_{n - 1} \dots q_1 \in Q^+$ (where $q_n, q_{n - 1}, \dots, q_1 \in Q$) of $\mathcal{T}$, we let $\widehat{\bm{q}} = \widehat{q}_n \widehat{q}_{n - 1} \dots \widehat{q}_1 \in \widehat{Q}^+$ denote the corresponding sequence of states in $\widehat{\mathcal{T}}$. As the alphabet of $\widehat{\mathcal{T}}$, we define $\widehat{\Sigma} = \Sigma \sqcup \{ \bot \}$ for a new symbol $\bot$. As transitions in $\widehat{\mathcal{T}}$, we define ones corresponding to transitions in $\mathcal{T}$ and some additional ones to make $\widehat{\mathcal{T}}$ complete:
      \begin{align*}
        \widehat{\delta} ={}& \{ \trans{\widehat{p}}{a}{b}{\widehat{q}} \mid \trans{p}{a}{b}{q} \in \delta \} \cup{}\\
        & \{ \trans{\widehat{p}}{\bot}{\bot}{\widehat{0}} \mid p \in Q \} \cup \{ \trans{\widehat{p}}{a}{\bot}{\widehat{0}} \mid {}\! \cdot a \text{ is undefined on } p, p \in Q, a \in \Sigma \}
      \end{align*}
      We have to show $S = \mathscr{S}(\widehat{\mathcal{T}})$.

      First, we show that $\widehat{0} \circ{}\!$ is a zero in $\mathscr{S}(\widehat{\mathcal{T}})$ by showing
      \[
        \forall q \in Q: \widehat{q} \widehat{0} \circ u = \widehat{0} \circ u = \widehat{0} \widehat{q} \circ u
      \]
      for all $u \in \widehat{\Sigma}^*$ using induction on $|u|$. As there is nothing to show for $u = \varepsilon$, we may assume $|u| > 0$. Notice that, by induction, we have, in particular, $\widehat{0} \circ v = \widehat{0}^2 \circ v$ for all $v \in \widehat{\Sigma}^*$ shorter than $u$.

      We start by handling the case in which there is no $\bot$ in $u$ (i.\,e.\ $u \in \Sigma^+$). If $0 \circ{}\!$ is defined on $u$, then so are $q0 \circ{}\! = 0 \circ{}\! = 0q \circ{}\!$ for $q \in Q$ because $0 \circ{}\!$ is a zero in $\mathscr{S}(\mathcal{T})$. By construction of $\widehat{\mathcal{T}}$, $q \circ{}\!$ and $\widehat{q} \circ{}\!$ coincide on all words from $\Sigma^*$ on which the former is defined. Thus, we have $\widehat{q} \widehat{0} \circ u = q 0 \circ u = 0 \circ u = \widehat{0} \circ u$ and $\widehat{0} \circ u = 0 \circ u = 0q \circ u = \widehat{0} \widehat{q} \circ u$. So, the interesting case is that $0 \circ{}\!$ is not defined on $u$. In this case, we can factorize $u = u_1 a u_2$ (for $a \in \Sigma$) in such a way that $u_1$ is the maximal prefix\footnote{A word $x$ is a prefix of another word $y$ if there is some word $z$ such that $y = xz$.} of $u$ on which $0 \circ{}\!$ is defined (possibly empty). Note that $0q \circ{}\!$ and $q0 \circ{}\!$ are both also defined on $u_1$ and that $u_2$ is shorter than $u$. By construction of $\widehat{\mathcal{T}}$, we have
      \begin{align*}
        \widehat{0} \circ u_1 a u_2 &= (\widehat{0} \circ u_1) \bot (\widehat{0} \circ u_2) \text{ and}\\
        \widehat{q} \widehat{0} \circ u_1 a u_2 &= (\widehat{q} \widehat{0} \circ u_1) \bot (\widehat{0}^2 \circ u_2) = (\widehat{0} \circ u_1) \bot (\widehat{0} \circ u_2)
      \end{align*}
      where equality of the first factors in the last step is due to the case above and equality of the last factors is due to induction. To calculate $\widehat{0} \widehat{q} \circ u$, we first recall that $0 \circ{}\!$ is undefined on $u_1 a$. Since we have $0q \circ{}\! = 0 \circ{}\!$ (as $0$ is a zero in $S$), we obtain that $0q \circ{}\!$ is also undefined on $u_1 a$ while it is defined on $u_1$. We further distinguish two cases. If $q \circ{}\!$ is defined on $u_1 a$, then $0 \cdot (q \circ u_1) \circ{}\! = 0 \cdot (\widehat{q} \circ u_1) \circ{}\!$ must be undefined on $q \cdot u_1 \circ a = \hat{q} \cdot u_1 \circ a$. Thus, we have
      \begin{align*}
        \widehat{0} \widehat{q} \circ u_1 a u_2 &= \widehat{0} \circ \left[ (\widehat{q} \circ u_1) \, (\widehat{q} \cdot u_1 \circ a) \, (\widehat{q} \cdot u_1 a \circ u_2) \right] \\
        &= (\widehat{0} \widehat{q} \circ u_1) \, \bot \left[ \widehat{0} (\widehat{q} \cdot u_1 a) \circ u_2 \right] = (\widehat{0} \circ u_1) \bot (\widehat{0} \circ u_2) \text{;}
      \intertext{on the other hand, if $q \circ{}\!$ is undefined on $u_1 a$, we have}
        \widehat{0} \widehat{q} \circ u_1 a u_2 &= \widehat{0} \circ \left[ (\widehat{q} \circ u_1) \, \bot \, (\widehat{0} \circ u_2) \right] \\
        &= (\widehat{0} \widehat{q} \circ u_1) \, \bot \, (\widehat{0}^2 \circ u_2) = (\widehat{0} \circ u_1) \bot (\widehat{0} \circ u_2)
      \end{align*}
      where, again, equality of the first factors in the respective last steps is due to the case above and equality of the last factors is due to induction. Notice that the values of $\widehat{q} \widehat{0} \circ{}\!$, $\widehat{0} \circ{}\!$ and $\widehat{0} \widehat{q} \circ{}\!$ on $u$ coincide in all cases.
      
      We still have to consider the case in which there is a $\bot$ in $u$. In this case, we can factorize $u = u_1 \bot u_2$ in such a way that $u_1$ does not contain a $\bot$ (i.\,e.\ $u_1 \in \Sigma^*$). We have
      \begin{align*}
        \widehat{0} \circ u_1 \bot u_2 &= (\widehat{0} \circ u_1) \, \bot \, (\widehat{0} \circ u_2) \text{,}\\
        \widehat{q} \widehat{0} \circ u_1 \bot u_2 &= (\widehat{q} \widehat{0} \circ u_1) \, \bot \, (\widehat{0}^2 \circ u_2) \text{ and}\\
        \widehat{0} \widehat{q} \circ u_1 \bot u_2 &= \widehat{0} \circ \left[ (\widehat{q} \circ u_1) \, \bot \, (\widehat{0} \circ u_2) \right] = (\widehat{0} \widehat{q} \circ u_1) \, \bot \, (\widehat{0}^2 \circ u_2) \text{.}
      \end{align*}
      Notice that the respective first factors coincide because $u_1$ does not contain $\bot$ and that the respective last factors coincide by induction. This shows that $\widehat{0} \circ{}\!$ is a zero in $\mathscr{S}(\mathcal{T})$.
      
      Next, we show that $\bm{p} \circ{}\! = \bm{q} \circ{}\!$ implies $\widehat{\bm{q}} \circ{}\! = \widehat{\bm{p}} \circ{}\!$ for all $\bm{p}, \bm{q} \in Q^+$. For this, assume $\bm{p} \circ{}\! = \bm{q} \circ{}\!$ for some $\bm{p}, \bm{q} \in Q^+$ and let $u \in \widehat{\Sigma}^*$ be arbitrary. We first handle the case that $u$ does not contain $\bot$ (i.\,e.\ $u \in \Sigma^*$). If, both, $\bm{p} \circ{}\!$ and $\bm{q} \circ{}\!$ are defined on $u$, then their values on $u$ are the same and they coincide with $\widehat{\bm{p}} \circ u$ and $\widehat{\bm{q}} \circ u$ by construction of $\widehat{\mathcal{T}}$. Thus, let $\bm{p} \circ{}\!$ and $\bm{q} \circ{}\!$ be both undefined on $u$. Without loss of generality, we may assume that they are already undefined on the first letter $a \in \Sigma$ of $u = a u_2$ (since, otherwise, we can substitute $\bm{p}$ by $\bm{p} \cdot u_1$ and $\bm{q}$ by $\bm{q} \cdot u_1$ where $u_1$ is the maximal prefix of $u$ on which $\bm{p} \circ{}\! = \bm{q} \circ{}\!$ is defined). Factorize $\bm{p} = \bm{p}_2 p \bm{p}_1$ with $p \in Q$ in such a way that $\bm{p}_1$ is maximal with $\bm{p}_1 \circ{}\!$ defined on $u$ and factorize $\bm{q} = \bm{q}_2 q \bm{q}_1$ analogously. We have\enlargethispage{1.5\baselineskip}
      \begin{align*}
        \widehat{\bm{p}} \circ u = \widehat{\bm{p}}_2 \widehat{p} \widehat{\bm{p}}_1 \circ a u_2 &= \widehat{\bm{p}}_2 \widehat{p} \circ \left[ (\widehat{\bm{p}}_1 \circ a) \, (\bm{\widehat{p}}_1 \cdot a \circ u_2) \right] \\
        &= \widehat{\bm{p}}_2 \circ \left[ \bot \, \left( \widehat{0} (\widehat{\bm{p}}_1 \cdot a)  \circ u_2 \right) \right] \\
        &= \widehat{\bm{p}}_2 \circ \left[ \bot \, \left( \widehat{0} \circ u_2 \right) \right]
        \intertext{since $p\bm{p}_1 \circ{}\!$ is undefined on $a$ and $\widehat{0} \circ{}\!$ is a zero in $\mathscr{S}(\widehat{\mathcal{T}})$. Furthermore, by construction of $\widehat{\mathcal{T}}$, we can continue with}
        \widehat{\bm{p}} \circ u &= \bot \, \left( \widehat{0}^{|\widehat{\bm{p}}_2|} \widehat{0} \circ u_2 \right) = \bot (\widehat{0} \circ u_2) \text{,}
      \end{align*}
      which is equal to $\widehat{\bm{q}} \circ u = \widehat{\bm{q}}_2 \widehat{q} \widehat{\bm{q}}_1 \circ a u_2$ by an analogous calculation.

      Finally, if $u$ contains $\bot$, we can factorize $u = u_1 \bot u_2$ in such a way that $u_1$ is the maximal prefix of $u$ not containing $\bot$. Then, we have $\widehat{\bm{p}} \circ u = (\widehat{\bm{p}} \circ u_1) \, \bot \, (\widehat{0}^{|\bm{p}|} \circ u_2)$ and $\widehat{\bm{q}} \circ u = (\widehat{\bm{q}} \circ u_1) \, \bot \, (\widehat{0}^{|\bm{q}|} \circ u_2)$ where the first factors coincide due to the case above and the last factors coincide since $\widehat{0} \circ{}\!$ is a zero in $\mathscr{S}(\widehat{\mathcal{T}})$.

      We have just shown well-definedness of the homomorphism $\mathscr{S}(\mathcal{T}) \to \mathscr{S}(\widehat{\mathcal{T}})$, $\bm{q} \circ{}\! \mapsto \widehat{\bm{q}} \circ{}\!$. Its surjectivity is trivial and it remains to show injectivity. If we have $\bm{p} \circ{}\! \neq \bm{q} \circ{}\!$ for $\bm{p}, \bm{q} \in Q^+$, there must be a witness $u \in \Sigma^*$ on which either $\bm{p} \circ{}\!$ and $\bm{q} \circ{}\!$ are both defined but their values differ or one (say: $\bm{p} \circ{}\!$) is defined while the other ($\bm{q} \circ{}\!$) is not. In the first case, the value of $\widehat{\bm{p}} \circ{}\!$ on $u$ is equal to that of $\bm{p} \circ{}\!$ and the one of $\widehat{\bm{q}} \circ{}\!$ is equal to that of $\bm{q} \circ{}\!$, respectively. In the second case, $\widehat{\bm{p}} \circ u = \bm{p} \circ u$ does not contain $\bot$ while $\widehat{\bm{q}} \circ u$ does. So, in either case $u$ is also a witness for the inequality of $\widehat{\bm{p}} \circ{}\!$ and $\widehat{\bm{q}} \circ{}\!$, which shows that $S = \mathscr{S}(\mathcal{T})$ and $\mathscr{S}(\widehat{\mathcal{T}})$ are isomorphic.
    \end{proof}

    This settles \autoref{prob:CompleteAndPartial} for the case of automaton semigroups containing a zero. However, what happens if the semigroup has no zero? In this case, we can adjoin one and thus get a complete automaton semigroup, as the next proposition states. For the equivalent result for complete automaton semigroups (using a similar construction), see \cite[Proposition~5.1]{cain2009automaton}.
    \begin{proposition}
      If $S$ is an automaton semigroup, then so is $S^0$, the semigroup resulting from $S$ by adjoining a zero.
    \end{proposition}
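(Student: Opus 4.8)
The plan is to adjoin a new state that induces the partial function undefined on every non-empty word (which is automatically a zero, just like the function $\bot$ in \autoref{ex:B2semigroup}) and, crucially, to add a fresh letter whose only role is to keep this new zero distinguishable from everything already present in $S$. Concretely, writing $S = \mathscr{S}(\mathcal{T})$ for a (partial) $S$-automaton $\mathcal{T} = (Q, \Sigma, \delta)$, I would define $\mathcal{T}' = (Q \sqcup \{ 0 \}, \Sigma \sqcup \{ \bot \}, \delta')$ with a fresh state $0$ and a fresh letter $\bot$, where
\[
  \delta' = \delta \cup \{ \trans{q}{\bot}{\bot}{q} \mid q \in Q \}
\]
and the new state $0$ carries no outgoing transition at all. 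So every old state merely loops on the new letter $\bot$ and is otherwise unchanged, while $0$ is a dead state. This is clearly deterministic, and below I write $\widehat{\bm{q}}$ for the word over the new state set corresponding to $\bm{q} \in Q^+$.

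First I would show that $0 \circ{}\!$ is a zero of $\mathscr{S}(\mathcal{T}')$. Since $0$ has no outgoing transition, $0 \circ{}\!$ is undefined on every non-empty word and fixes $\varepsilon$. Because all functions in the semigroup are length-preserving, for any $f \in \mathscr{S}(\mathcal{T}')$ both composites involving $0 \circ{}\!$ and $f$ can only be defined on $\varepsilon$ and there return $\varepsilon$; hence both equal $0 \circ{}\!$. In particular, $0 \circ{}\!$ absorbs every product in which the state $0$ occurs, so the only element that could be new compared to $S$ is $0 \circ{}\!$ itself.

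Next I would verify that $S$ embeds faithfully via $\bm{q} \circ{}\! \mapsto \widehat{\bm{q}} \circ{}\!$. The key observation is that, since every old state loops on $\bot$ without changing its state, reading a block of $\bot$'s neither alters the current state nor causes undefinedness. Consequently the action of $\widehat{\bm{q}} \circ{}\!$ on a word $w \in (\Sigma \sqcup \{ \bot \})^*$ is completely determined by the action of the original $\bm{q} \circ{}\!$ on the $\Sigma$-subword of $w$, the $\bot$'s simply being copied to the output. This makes the map a well-defined homomorphism (if $\bm{p} \circ{}\! = \bm{q} \circ{}\!$, the two extensions agree on all of $(\Sigma \sqcup \{ \bot \})^*$), and any witness $u \in \Sigma^*$ for $\bm{p} \circ{}\! \neq \bm{q} \circ{}\!$ remains a witness over the larger alphabet, giving injectivity.

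The step I expect to be the actual obstacle is showing that the adjoined zero is \emph{genuinely new}, i.e.\ distinct from every $\widehat{\bm{q}} \circ{}\!$. This is precisely the subtlety already flagged in the discussion preceding \autoref{prop:zeroImpliesComplete}: if $S$ happens to contain the partial function undefined on all of $\Sigma^+$, then a naive construction would let the new state collapse onto it. Here the $\bot$-loops resolve this immediately, since $\widehat{\bm{q}} \circ \bot = \bot$ is defined for every $\bm{q} \in Q^+$, whereas $0 \circ \bot$ is undefined; thus $0 \circ{}\!$ lies outside the image of $S$. Combining the three steps, $\mathscr{S}(\mathcal{T}')$ is a faithful copy of $S$ together with exactly one additional element, which is a zero, and therefore $\mathscr{S}(\mathcal{T}') \cong S^0$.
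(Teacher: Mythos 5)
Your proposal is correct and follows essentially the same route as the paper: adjoin a dead state (so that it induces the everywhere-undefined partial function, hence a zero) together with a fresh letter on which all old states loop, then check that the old semigroup embeds and that the new zero is separated from it by the fresh letter. The only differences are cosmetic (the paper calls the new letter $\top$ and works with a disjoint copy of the state set, and it verifies well-definedness by an explicit induction on words of the form $u_1 \top u_2$ rather than via your ``$\Sigma$-subword'' observation, but the argument is the same).
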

    \begin{proof}
      Let $S = \mathscr{S}(\mathcal{T})$ for $\mathcal{T} = (Q, \Sigma, \delta)$. We define an $S$-automaton $\widehat{\mathcal{T}} = (\widehat{Q} \sqcup \{ 0 \}, \Sigma \sqcup \{ \top \}, \widehat{\delta})$ where $\widehat{Q}$ is a disjoint copy of $Q$, $0$ is a new state, $\top$ a new letter and the transitions are given by
      \begin{align*}
        \widehat{\delta} = \{ \trans{\widehat{p}}{a}{b}{\widehat{q}} \mid \trans{p}{a}{b}{q} \in \delta \} \cup \{ \trans{\widehat{q}}{\top}{\top}{\widehat{q}} \mid q \in Q \} \text{,}
      \end{align*}
      i.\,e.\ $\widehat{\mathcal{T}}$ has the same transitions as $\mathcal{T}$ and additional $\top$-loops at every state expect for $0$, which does not have any outgoing transitions. Thus, $0 \circ{}\!$ is undefined everywhere (except for $\varepsilon$) and, therefore, a zero in $\widehat{S} = \mathscr{S}(\widehat{\mathcal{T}})$.
      
      For a sequence of states $\bm{q} = q_n q_{n - 1} \dots q_1 \in Q^+$ with $q_n, q_{n - 1}, \dots, q_1 \in Q$, let $\widehat{\bm{q}} = \widehat{q}_n \widehat{q}_{n - 1} \dots \widehat{q}_1$ denote the corresponding sequence of states in $\widehat{\mathcal{T}}$. We claim that $\varphi: S \to \widehat{S}$, $\bm{q} \circ{}\! \mapsto \widehat{\bm{q}} \circ{}\!$ is a well-defined, injective homomorphism whose image is $\im \varphi = \widehat{S} \setminus \{ 0 \circ{}\! \}$. From this follows $\widehat{S} = S^0$.

      For well-definedness, we have to show $\bm{p} \circ{}\! = \bm{q} \circ{}\! \implies \widehat{\bm{p}} \circ{}\! = \widehat{\bm{q}} \circ{}\!$ for all $\bm{p}, \bm{q} \in Q^+$. For a word $u \in \Sigma^*$ (i.\,e.\ $u$ does not contain $\top$), we have that $\bm{p} \circ{}\!$ and $\widehat{\bm{p}} \circ{}\!$ are either both undefined, or both defined and their values coincide by construction of $\widehat{\mathcal{T}}$. The same holds for $\bm{q} \circ{}\!$ and $\widehat{\bm{q}} \circ{}\!$. So, we only have to show anything if $u$ contains $\top$. Let $u = u_1 \top u_2$ for some word $u_1 \in \Sigma^*$ (i.\,e.\ $u_1$ does not contain $\top$). If $\bm{p} \circ{}\!$ and $\bm{q} \circ{}\!$ are both undefined on $u_1$, then so are $\widehat{\bm{p}} \circ{}\!$ and $\widehat{\bm{q}} \circ{}\!$ on $u$. Therefore, we may assume that $\widehat{\bm{p}} \circ u_1 = \bm{p} \circ u_1 = \bm{q} \circ u_1 = \widehat{\bm{q}} \circ u_1 = v_1$ for some word $v_1 \in \Sigma^*$, which yields
      \begin{align*}
        \widehat{\bm{p}} \circ u &= \widehat{\bm{p}} \circ u_1 \top u_2 = v_1 (\widehat{\bm{p}} \cdot u_1 \circ \top u_2) = v_1 \top (\widehat{\bm{p}} \cdot u_1 \circ u_2) \text{ and}\\
        \widehat{\bm{q}} \circ u &= \widehat{\bm{q}} \circ u_1 \top u_2 = v_1 (\widehat{\bm{q}} \cdot u_1 \circ \top u_2) = v_1 \top (\widehat{\bm{q}} \cdot u_1 \circ u_2) \text{.}
      \end{align*}
      Notice that, by induction on $|u|$, $\widehat{\bm{p}} \cdot u_1 \circ{}\!$ is defined on $u_2$ if and only if so is $\widehat{\bm{q}} \cdot u_1 \circ{}\!$ and that the values coincide if they are defined.

      Injectivity of $\varphi$ is clear since a witness $u \in \Sigma^*$ for $\bm{p} \circ{}\! \neq \bm{q} \circ{}\!$ is also a witness for $\widehat{\bm{p}} \circ{}\! \neq \widehat{\bm{q}} \circ{}\!$. Finally, $\widehat{S} \setminus \{ 0 \circ{}\! \} \subseteq \im \varphi$ is clear since $\bm{q} \circ{}\!$ is a pre-image of $\widehat{\bm{q}} \circ{}\!$. To see $0 \circ{}\! \not\in \im \varphi$, we note that $\widehat{\bm{q}} \circ \top = \top$ for all $\bm{q} \in Q^+$ by construction of $\widehat{\mathcal{T}}$ but that $0 \circ{}\!$ is undefined on $\top$.
    \end{proof}

    Together with \autoref{prop:zeroImpliesComplete} this yields that $S^0$ is a complete automaton semigroup for every automaton semigroup $S$:
    \begin{corollary}[{\cite[Proposition~1]{DAngeli2017}}]\label{cor:SPartialImpliesS0Complete}
      If $S$ is a (partial) automaton semigroup, then $S^0$ is a complete automaton semigroup.
    \end{corollary}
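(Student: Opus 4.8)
The plan is simply to chain together the two preceding propositions, since the corollary is explicitly a composition of them. First I would observe that $S^0$ always carries a zero by construction: adjoining a zero to $S$ produces, by definition, an element $z$ satisfying $zs = sz = z$ for every $s \in S^0$. Hence the hypothesis of \autoref{prop:zeroImpliesComplete} is automatically met for $S^0$, provided we first know that $S^0$ is a (partial) automaton semigroup at all.

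Concretely, starting from an arbitrary (partial) automaton semigroup $S$, the proposition immediately preceding the corollary guarantees that $S^0$ is again a (partial) automaton semigroup. Combining this with the observation above, $S^0$ is a (partial) automaton semigroup \emph{possessing a zero}. Applying \autoref{prop:zeroImpliesComplete} to $S^0$ then yields directly that $S^0$ is a complete automaton semigroup, which is exactly the claim.

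The only step warranting a moment's attention is the bookkeeping check that the element adjoined when forming $S^0$ really is a zero in the algebraic sense required by \autoref{prop:zeroImpliesComplete}, and not merely a sink-like state of some completed automaton; but this is immediate from the definition of $S^0$. I therefore do not expect any genuine obstacle here: all the substantive work — both the completion-via-an-existing-zero construction and the realisation of $S^0$ as an automaton semigroup — has already been carried out in the two propositions, and the corollary is nothing more than their direct composition.
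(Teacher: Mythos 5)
Your proposal is correct and matches the paper's own justification exactly: the paper derives the corollary by first applying the preceding proposition to realise $S^0$ as a (partial) automaton semigroup and then invoking \autoref{prop:zeroImpliesComplete}, since $S^0$ has a zero by construction. No further comment is needed.
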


    We have seen that, if a (partial) automaton semigroup $S$ contains a zero, then $S$ is a complete automaton semigroup. If it does not, then $S^0$ is one. But what about $S$ itself? To answer this question, we will show that \autoref{prob:CompleteAndPartial} is equivalent to an important open question in (complete) automaton semigroup theory asked by Cain \cite[Open problem~5.3]{cain2009automaton}, which can be stated in the following way.

    \begin{problem}\label{prob:CainsProblem}
      Is the implication
      \[
        S^0 \text{ is a complete automaton semigroup} \implies S \text{ is a complete automaton semigroup}
      \]
      true for all semigroups $S$?
    \end{problem}
    
    One direction of the equivalence between \autoref{prob:CompleteAndPartial} and \autoref{prob:CainsProblem} follows directly from \autoref{cor:SPartialImpliesS0Complete}:

    \begin{lemma}\label{lem:positiveCainImpliesPositiveCompletePartial}
      If \autoref{prob:CainsProblem} has a positive answer, then so has \autoref{prob:CompleteAndPartial}.
    \end{lemma}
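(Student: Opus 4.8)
The plan is to chain the already-established \autoref{cor:SPartialImpliesS0Complete} with the assumed positive answer to \autoref{prob:CainsProblem}. Concretely, I would assume that \autoref{prob:CainsProblem} has a positive answer, i.\,e.\ that for every semigroup $S$ the implication ``$S^0$ is a complete automaton semigroup $\implies$ $S$ is a complete automaton semigroup'' holds, and then show that an arbitrary (partial) automaton semigroup $S$ must in fact be a complete automaton semigroup (which is exactly what \autoref{prob:CompleteAndPartial} asks).

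The deduction itself is a single composition of implications. First I would invoke \autoref{cor:SPartialImpliesS0Complete}: since $S$ is a (partial) automaton semigroup, the semigroup $S^0$ obtained by adjoining a zero is a complete automaton semigroup. Then, plugging this $S$ into the assumed positive answer to \autoref{prob:CainsProblem}, the fact that $S^0$ is a complete automaton semigroup immediately yields that $S$ itself is a complete automaton semigroup. Since $S$ was an arbitrary (partial) automaton semigroup, this establishes that every (partial) automaton semigroup is complete, so \autoref{prob:CompleteAndPartial} has a positive answer as well.

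Because the bulk of the work has already been carried out in the proof of \autoref{cor:SPartialImpliesS0Complete} (resting in turn on \autoref{prop:zeroImpliesComplete}), there is no genuine technical obstacle in this lemma; it is essentially a bookkeeping step linking the two problem statements. The only subtlety worth flagging is that $S^0$ always denotes adjoining a \emph{fresh} zero, so the argument applies uniformly regardless of whether $S$ already has a zero---and in the case where $S$ does have a zero one could bypass \autoref{prob:CainsProblem} entirely and appeal directly to \autoref{prop:zeroImpliesComplete}.
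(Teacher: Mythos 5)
Your argument is exactly the paper's: the lemma is stated as an immediate consequence of \autoref{cor:SPartialImpliesS0Complete}, obtained by feeding the conclusion that $S^0$ is a complete automaton semigroup into the assumed positive answer to \autoref{prob:CainsProblem}. The proposal is correct and matches the intended proof.
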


    The other direction of the equivalence is a bit more difficult. In fact, we will prove this direction by showing that the analogous question to \autoref{prob:CainsProblem} for (partial) automaton semigroups has a positive answer! In order to do this, we first introduce a normalization construction, which will come in handy later in the proof. The idea of this construction is to take a disjoint copy of an automaton with a doubled alphabet. The new letters only occur on self-loops. These loops will allow us later to remove states from the automaton without losing \enquote{the change at the last letter}. It will become clear from the proof below what this means precisely.

    \begin{definition}
      For an automaton $\mathcal{T} = (Q, \Sigma, \delta)$, we define its \emph{end marker extension} to be the automaton $\hat{\mathcal{T}} = (\hat{Q}, \hat{\Sigma}, \hat{\delta})$ with a disjoint copy $\hat{Q} = \{ \hat{q} \mid q \in Q \}$ of $Q$ as state set, with alphabet $\hat{\Sigma} = \Sigma \sqcup \{ a_\$ \mid a \in \Sigma \}$ and with transitions
      \begin{align*}
        \hat{\delta} ={}& \{ \trans{\hat{q}}{a}{b}{\hat{p}} \mid \trans{q}{a}{b}{p} \in \delta \} \cup{}\\
        &\{ \trans{\hat{q}}{a_\$}{b_\$}{\hat{q}} \mid \trans{q}{a}{b}{p} \in \delta \} \text{.}
      \end{align*}
      The elements of $\Sigma_\$ = \{ a_\$ \mid a \in \Sigma \}$ are called \emph{end marker letters}.
    \end{definition}

    Notice that the end marker extension of an automaton is deterministic (complete) if and only if the original automaton was deterministic (complete).

    Adding the end marker self-loops to an $S$-automaton does not change the generated semigroup as we prove in the next lemma. The main application of this result is that, from now on, we can safely assume that any $S$-automaton generating a semigroup is an end marker extension (of some other automaton).

    \begin{lemma}\label{lem:endMarkerEquivalence}
      For any $S$-automaton $\mathcal{T} = (Q, \Sigma, \delta)$ and its end marker extension $\hat{\mathcal{T}} = (\hat{Q}, \hat{\Sigma}, \hat{\delta})$, we have $\mathscr{S}(\mathcal{T}) = \mathscr{S}(\hat{\mathcal{T}})$.
    \end{lemma}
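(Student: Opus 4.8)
The plan is to show that the natural map $\psi \colon \mathscr{S}(\mathcal{T}) \to \mathscr{S}(\hat{\mathcal{T}})$, $\bm{q} \circ{}\! \mapsto \hat{\bm{q}} \circ{}\!$, is a well-defined isomorphism. Surjectivity and the homomorphism property are immediate here, since the states of $\hat{\mathcal{T}}$ are exactly the $\hat{q}$ with $q \in Q$ and since $\widehat{\bm{q}\bm{p}} = \hat{\bm{q}} \hat{\bm{p}}$; so the only real content is well-definedness and injectivity. For injectivity I would observe that, by construction of $\hat{\delta}$, the maps $\hat{\bm{q}} \circ{}\!$ and $\bm{q} \circ{}\!$ agree (in value and in definedness) on every word from $\Sigma^*$, so any witness $u \in \Sigma^*$ for $\bm{q} \circ{}\! \neq \bm{p} \circ{}\!$ is also a witness for $\hat{\bm{q}} \circ{}\! \neq \hat{\bm{p}} \circ{}\!$.

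The heart of the argument is well-definedness, i.\,e.\ $\bm{q} \circ{}\! = \bm{p} \circ{}\! \implies \hat{\bm{q}} \circ{}\! = \hat{\bm{p}} \circ{}\!$, and this is where the end marker letters must be controlled. The key structural fact — and the step I expect to be the main obstacle — is that an end marker letter is consumed \emph{without advancing the compound state}: for every $\bm{q} \in Q^+$ and every $a_\$ \in \Sigma_\$$ one has $\hat{\bm{q}} \circ a_\$ = (\bm{q} \circ a)_\$$ (defined precisely when $\bm{q} \circ a$ is) and, crucially, $\hat{\bm{q}} \cdot a_\$ = \hat{\bm{q}}$. I would prove this by induction on $|\bm{q}|$, peeling off the rightmost state via the recursive definition $\bm{q} p \cdot u = [\bm{q} \cdot (p \circ u)](p \cdot u)$ and using the self-loops \trans{\hat{q}}{a_\$}{b_\$}{\hat{q}}. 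The subtlety is to check that the self-loop actually present along the run: $\hat{q}_i$ carries a self-loop on an end marker letter $c_\$$ iff $q_i \circ c$ is defined, and this is guaranteed exactly as long as $\bm{q} \circ a$ stays defined, so the run reads $a_\$$ entirely on self-loops and returns $(\bm{q} \circ a)_\$$. Thus end marker letters reveal the single-letter action $\bm{q} \circ a$ of the whole word while leaving the state fixed, whereas ordinary letters $a \in \Sigma$ behave exactly as in $\mathcal{T}$, giving $\hat{\bm{q}} \cdot a = \widehat{\bm{q} \cdot a}$.

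With this in hand, I would prove by induction on $|u|$ that, for every $u \in \hat{\Sigma}^*$, both the definedness and the value of $\hat{\bm{q}} \circ u$ depend only on the partial function $\bm{q} \circ{}\!$ and not on the chosen representative $\bm{q}$. Writing $u = c v$, the case $c = a_\$$ gives $\hat{\bm{q}} \circ u = (\bm{q} \circ a)_\$ \, (\hat{\bm{q}} \circ v)$ by the observation above, and both factors depend only on $\bm{q} \circ{}\!$ (the second by induction). The case $c = a \in \Sigma$ gives $\hat{\bm{q}} \circ u = (\bm{q} \circ a) \, (\widehat{\bm{q} \cdot a} \circ v)$; here I would use that $\bm{q} \circ{}\! = \bm{p} \circ{}\!$ forces $(\bm{q} \cdot a) \circ{}\! = (\bm{p} \cdot a) \circ{}\!$, which follows from the factorization $\bm{q} \circ a v = (\bm{q} \circ a)\,((\bm{q} \cdot a) \circ v)$ together with length-preservation, so that, by induction applied to the shorter word $v$, the factor $\widehat{\bm{q} \cdot a} \circ v$ again depends only on $\bm{q} \circ{}\!$. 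Since $\bm{q} \circ{}\! = \bm{p} \circ{}\!$ then yields $\hat{\bm{q}} \circ u = \hat{\bm{p}} \circ u$ for all $u \in \hat{\Sigma}^*$, well-definedness follows, and combined with injectivity and surjectivity this shows that $\psi$ is an isomorphism, i.\,e.\ $\mathscr{S}(\mathcal{T}) = \mathscr{S}(\hat{\mathcal{T}})$.
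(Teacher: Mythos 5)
Your proof is correct and rests on the same two facts as the paper's: end marker letters are read entirely on self-loops (so they expose $\bm{q}\circ a$ without advancing the state), and ordinary letters behave exactly as in $\mathcal{T}$. The only difference is presentational — the paper establishes well-definedness by a longest-common-prefix/minimal-witness contradiction, whereas you run a direct induction on $|u|$ — so this is essentially the paper's argument.
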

    \begin{proof}
      For any sequence of states $\bm{q} = q_n \dots q_1$ with $q_1, \dots, q_n \in Q$, let $\hat{\bm{q}} = \hat{q}_n \dots \hat{q}_1$ be the corresponding sequence of states from the end marker extension. We will prove $\bm{q} \circ{}\! = \bm{p} \circ{}\! \iff \hat{\bm{q}} \circ{}\! = \hat{\bm{p}} \circ{}\!$ for all $\bm{q}, \bm{p} \in Q^+$. Notice that, thus, $\hat{\cdot}$ induces a well-defined isomorphism $\mathscr{S}(\mathcal{T}) \to \mathscr{S}(\hat{\mathcal{T}})$.

      First, suppose there are $\bm{q}, \bm{p} \in Q^+$ such that $\bm{q} \circ{}\! = \bm{p} \circ{}\!$ holds but $\hat{\bm{q}} \circ{}\! = \hat{\bm{p}} \circ{}\!$ does not, i.\,e.\ there is a finite word $\hat{u} \in \hat{\Sigma}^*$ with $\hat{\bm{q}} \circ \hat{u} \neq \hat{\bm{p}} \circ \hat{u}$ (including the case where one is defined on $\hat{u}$ while the other is not). Let $\hat{u}_0$ denote the longest prefix of $\hat{u}$ on which $\hat{\bm{q}} \circ{}\!$ and $\hat{\bm{p}} \circ{}\!$ coincide. Notice that both partial functions are thus defined on $\hat{u}_0$. So, we have $\hat{u} = \hat{u}_0 \hat{a} \hat{u}_1$ for some letter $\hat{a} \in \hat{\Sigma}$ and a word $\hat{u}_1 \in \hat{\Sigma}^*$. Let $u_0$ be the projection of $\hat{u}_0$ on the alphabet $\Sigma$ (i.\,e.\ $u_0$ is obtained from $\hat{u}_0$ by removing all letters from $\Sigma_\$$). Now, consider $\bm{r} = \bm{q} \cdot u_0$ and $\bm{s} = \bm{p} \cdot u_0$. As letters from $\Sigma_\$$ only occur on self-loops in $\hat{\mathcal{T}}$ by construction, we have $\hat{\bm{r}} = \hat{\bm{q}} \cdot \hat{u}_0$ and $\hat{\bm{s}} = \hat{\bm{p}} \cdot \hat{u}_0$, i.\,e., if we read $u_0$ in $\mathcal{T}^{|\bm{q}|}$ starting in $\bm{q}$, then the state we end in corresponds to the state reached if we start reading $\hat{u}_0$ in $\hat{\mathcal{T}}^{|\bm{q}|}$ starting in state $\hat{\bm{q}}$. Notice that, by the choice of $\hat{u}_0$, all states $\bm{r} = \bm{q} \cdot u_0, \bm{s} = \bm{p} \cdot u_0, \hat{\bm{r}}, \hat{\bm{q}} \cdot \hat{u}_0, \hat{\bm{s}}$ and $\hat{\bm{p}} \cdot \hat{u}_0$ are defined. As we have $\bm{q} \circ{}\! = \bm{p} \circ{}\!$, we also have $\bm{r} \circ{}\! = \bm{s} \circ{}\!$. Therefore, we may safely assume that $\hat{\bm{q}} \circ{}\!$ and $\hat{\bm{p}} \circ{}\!$ are distinct already on $\hat{a}$ (i.\,e.\ we have $\hat{u}_0 = \varepsilon$ without loss of generality).

      Since, by construction, $\hat{\bm{q}} \circ{}\!$ and $\bm{q} \circ{}\!$ have the same values on words from $\Sigma^*$, we must have $\hat{a} = a_\$ \in \Sigma_\$ = \hat{\Sigma} \setminus \Sigma$ and, thus, $\hat{\bm{q}} \circ a_\$ \neq \hat{\bm{p}} \circ a_\$$. By construction, this is only possible if $\bm{q} \circ a \neq \bm{p} \circ a$ (including the case where they are distinct because one of them is defined while the other is not), which constitutes a contradiction.

      For the other direction, assume there are $\bm{q}, \bm{p} \in Q^+$ such that $\hat{\bm{q}} \circ{}\! = \hat{\bm{p}} \circ{}\!$ holds while $\bm{q} \circ{}\! = \bm{p} \circ{}\!$ does not, i.\,e.\ there is without loss of generality (by the same argumentation as above) a letter $a \in \Sigma$ such that $\bm{q} \circ a \neq \bm{p} \circ a$ (again, one of them can be undefined). However, by construction, this implies $\hat{\bm{q}} \circ a \neq \hat{\bm{p}} \circ a$.
    \end{proof}

    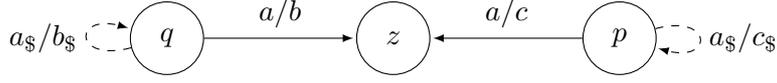
\begin{figure}[t]
      \begin{center}
        \begin{tikzpicture}[auto, shorten >=1pt, >=latex, node distance=2cm]
          \node[state] (z) {$z$};
          \node[state, left=of z] (q) {$q$};
          \node[state, right=of z] (p) {$p$};

          \path[->] (q) edge node[above] {$a/b$} (z)
                    (p) edge node[above] {$a/c$} (z)
                    (q) edge[loop left, dashed] node {$a_\$/b_\$$} (q)
                    (p) edge[loop right, dashed] node {$a_\$/c_\$$} (p)
           ;
        \end{tikzpicture}
        \caption{Removing $z$ without introducing the end marker self-loops causes $q \circ{}\!$ and $p \circ{}\!$ to become the same function.}\label{fig:removingZeroWithoutEndMarkers}
      \end{center}
    \end{figure}

    By using the end marker construction, we show that the class of (partial) automaton semigroups is closed under removing an adjoined zero. This positively answers the analogue to \autoref{prob:CainsProblem} for (partial) automaton semigroups. The idea of the construction used in the following proof is straightforward: we remove all states which act like the zero. However, without using the end marker extension, this could cause more changes to the generated semigroup than simply removing the zero as \autoref{fig:removingZeroWithoutEndMarkers} illustrates.
    \begin{proposition}\label{prop:AutSGClosedUnderRemovalOfZero}
      If $S^0$ is a (partial) automaton semigroup, then so is $S$.
    \end{proposition}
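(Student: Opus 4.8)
The plan is to realise $S$ by deleting from an automaton for $S^0$ exactly those states that act like the adjoined zero, using the end marker self-loops to retain the information that would otherwise be lost. By hypothesis $S^0 = \mathscr{S}(\mathcal{U})$ for some $S$-automaton $\mathcal{U}$, and the adjoined zero is $\bm{z} \circ{}$ for some $\bm{z} \in Q^n$. First I would replace $\mathcal{U}$ by $\bigcup_{i=1}^n \mathcal{U}^i$ (exactly as in the proof of \autoref{prop:zeroImpliesComplete}) so that the zero becomes a single state, and then pass to the end marker extension, which by \autoref{lem:endMarkerEquivalence} does not change the generated semigroup and keeps the zero a single state. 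Thus I may assume $S^0 = \mathscr{S}(\mathcal{T})$ where $\mathcal{T} = (Q, \Sigma \sqcup \Sigma_\$, \delta)$ is an end marker extension and the zero is a single state $0 \in Q$. Writing $Z = \{ q \in Q \mid q \circ{} = 0 \circ{} \}$, I set $Q' = Q \setminus Z$ and let $\mathcal{T}'$ be the (still deterministic, partial) automaton obtained from $\mathcal{T}$ by deleting every state of $Z$ and every transition touching $Z$. Denoting by $\bm{q} \circ_{\mathcal{T}}{}$ and $\bm{q} \circ_{\mathcal{T}'}{}$ the functions induced by $\bm{q}$ in $\mathcal{T}$ and in $\mathcal{T}'$, the goal is to show that $\bm{q} \circ_{\mathcal{T}'}{} \mapsto \bm{q} \circ_{\mathcal{T}}{}$ is an isomorphism $\mathscr{S}(\mathcal{T}') \to S$; it is a homomorphism by construction, so it suffices to prove it is a well-defined bijection.

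The second step records two facts about $Z$. Since $0 \circ{}$ is absorbing, any $\bm{t} \in Q^*$ that meets $Z$ satisfies $\bm{t} \circ_{\mathcal{T}}{} = 0 \circ{}$; conversely a word $\bm{t} \in (Q')^+$ only multiplies elements of the subsemigroup $S = S^0 \setminus \{ 0 \}$, so $\bm{t} \circ_{\mathcal{T}}{} \in S$, i.e.\ $\bm{t} \circ_{\mathcal{T}}{} \neq 0 \circ{}$. Hence $\bm{t} \circ_{\mathcal{T}}{} = 0 \circ{}$ if and only if $\bm{t}$ meets $Z$, and $S$ is generated by $Q'$, which already yields surjectivity onto $S$. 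It is also immediate that, for $\bm{q} \in (Q')^+$, the function $\bm{q} \circ_{\mathcal{T}'}{}$ is the restriction of $\bm{q} \circ_{\mathcal{T}}{}$ to those inputs whose run in $\mathcal{T}$ never enters $Z$, and that the end marker self-loops survive in $\mathcal{T}'$.

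For injectivity I would use the residual identity: $\bm{q} \circ_{\mathcal{T}}{} = \bm{p} \circ_{\mathcal{T}}{}$ implies $\bm{q} \cdot v \circ_{\mathcal{T}}{} = \bm{p} \cdot v \circ_{\mathcal{T}}{}$ for every $v$ in the common domain. If $\bm{p}$'s run first meets $Z$ after a prefix $v$, then $\bm{p} \cdot v \circ_{\mathcal{T}}{} = 0 \circ{}$, hence $\bm{q} \cdot v \circ_{\mathcal{T}}{} = 0 \circ{}$ by the residual identity, so $\bm{q} \cdot v$ also meets $Z$ by the equivalence above. Consequently the two functions have the same $\mathcal{T}'$-domain and agree there (where they equal the $\mathcal{T}$-values), so $\bm{q} \circ_{\mathcal{T}'}{} = \bm{p} \circ_{\mathcal{T}'}{}$.

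Well-definedness is the hard part and is exactly where the end markers are indispensable, as \autoref{fig:removingZeroWithoutEndMarkers} illustrates. Assuming $\bm{q} \circ_{\mathcal{T}'}{} = \bm{p} \circ_{\mathcal{T}'}{}$, I would compare $\bm{q} \circ_{\mathcal{T}}{}$ and $\bm{p} \circ_{\mathcal{T}}{}$ on an input $u$: along the longest prefix $u_0$ whose run stays in $Q'$ they agree (both equal the identical $\mathcal{T}'$-values). At the first letter $a \in \Sigma$ that leaves $Q'$, the genuine $a$-transition runs into the deleted set $Z$, but the probe $a_\$ \in \Sigma_\$$ is a self-loop and is never deleted; therefore $\bm{q} \circ_{\mathcal{T}'}{}$ and $\bm{p} \circ_{\mathcal{T}'}{}$ applied to $u_0 a_\$$ still expose the single-letter images $\bm{q} \cdot u_0 \circ a$ and $\bm{p} \cdot u_0 \circ a$ (and their simultaneous definedness). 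Equality of the $\mathcal{T}'$-functions forces these to coincide, and once a run has entered $Z$ both residuals equal $0 \circ{}$, so the remaining suffixes produce identical output. The main obstacle is precisely this recovery of ``the change at the last letter'' through the surviving self-loop $a_\$$ rather than through the deleted transition on $a$; it is what prevents deleting $Z$ from merging distinct elements of $S$, and completes the proof that $\mathscr{S}(\mathcal{T}') \cong S$.
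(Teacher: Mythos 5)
Your proposal is correct and follows essentially the same route as the paper: pass to the end marker extension, delete the states acting as the adjoined zero, use the fact that $S$ is a closed subsemigroup of $S^0$ to control which products hit zero, and recover the ``change at the last letter'' through the surviving $a_\$$ self-loops to establish well-definedness. The only deviation is your preliminary replacement of $\mathcal{U}$ by $\bigcup_{i=1}^n \mathcal{U}^i$ to make the zero a single state, which is harmless but redundant, since the subsemigroup argument already forces any representation of the zero in $Q^+$ to contain a state of $Z$.
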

    \begin{proof}
      Let $\mathcal{T} = (Q, \Sigma, \delta)$ be an $S$-automaton generating $S^0$. By \autoref{lem:endMarkerEquivalence}, we may assume that $\mathcal{T}$ is an end marker extension (of some other $S$-automaton generating $S^0$).

      Let $Z = \{ \bm{z} \in Q^+ \mid \bm{z} \circ{}\! \text{ is the zero in } S^0 \}$ consist of the state sequences acting like the zero and let $\mathcal{T}' = (Q', \Sigma, \delta')$ be a copy of $\mathcal{T}$ where all states from $Z$ are removed, i.\,e.\ we have that $Q'$ is a copy of $Q \setminus Z$ and $\delta' = \{ \trans{q'}{a}{b}{p'} \mid \trans{q}{a}{b}{p} \in \delta, q, p \in Q \setminus Z \}$. We claim that this automaton generates $S$.

      For any state sequence $\bm{q} = q_n \dots q_1 \in (Q \setminus Z)^+$ with $q_1, \dots, q_n \in Q \setminus Z$, let $\bm{q}' = q'_n \dots q'_1$ be the corresponding state sequence in $\mathcal{T}'$. To prove the claim, we show that $q' \mapsto q$ for all $q' \in Q'$ induces a well-defined monomorphism $\iota: \mathscr{S}(\mathcal{T}') \to \mathscr{S}(\mathcal{T})$ whose image is $S$.

      First, we show $\bm{q}' \in Q'^+ \implies \bm{q} \not\in Z$, which proves that the image of $\iota$ does not contain the zero. Suppose there is a state sequence $\bm{q} = q_n \dots q_1$ such that $q_1, \dots, q_n \in Q \setminus Z$ but $\bm{q} \in Z$. Then, we have found elements $q_n \circ{}\!, \dots, q_1 \circ{}\! \in S$ whose product is zero in $S^0$. This yields that $S$ is not a (closed) subsemigroup of $S^0$, which is not possible.

      On the other hand, any state sequence $\bm{q} \not\in Z$ cannot contain a state from $Z$ as this state and, thus, any product containing the state (such as $\bm{q}$) would be the zero in $S^0$. This proves that the image of $\iota$ contains every element from $S$.

      To show the injectivity of $\iota$, assume that there are $\bm{q}, \bm{p} \in (Q \setminus Z)^+$ with $\bm{q} \circ{}\! = \bm{p} \circ{}\!$ but $\bm{q}' \circ{}\! \neq \bm{p}' \circ{}\!$, i.\,e.\ there is a finite word $u \in \Sigma^*$ on which $\bm{q}' \circ{}\!$ and $\bm{p}' \circ{}\!$ are distinct. As they need to be distinct, they cannot be both undefined on $u$. On the other hand, they can neither be both defined on $u$ as, by construction, $\bm{q}' \circ{}\!$ coincides with $\bm{q} \circ{}\!$ whenever it is defined and the same is true for $\bm{p}'$ and $\bm{p}$. So, if both were defined, we would have $\bm{q}' \circ u = \bm{q} \circ u = \bm{p} \circ u = \bm{p}' \circ u$. Therefore, the only remaining case is that one (say: $\bm{q}' \circ{}\!$) is defined on $u$ while the other ($\bm{p}' \circ{}\!$) is not. Let $u_0$ be the longest prefix of $u$ on which $\bm{q}' \circ{}\!$ and $\bm{p}' \circ{}\!$ coincide and write $u = u_0 a u_1$ for some $a \in \Sigma$ and $u_1 \in \Sigma^*$. Notice that both functions need to be defined on $u_0$ as, otherwise, they would also be (both) undefined on $u$. Thus, $\bm{q}' \cdot u_0$ and $\bm{p}' \cdot u_0$ need to be defined, too, and this is, in particular, also true for $\bm{r} = \bm{q} \cdot u_0$ and $\bm{s} = \bm{p} \cdot u_0$. Notice that $\bm{q} \circ{}\! = \bm{p} \circ{}\!$ implies $\bm{r} \circ{}\! = \bm{s} \circ{}\!$ and that, by construction of $\mathcal{T}'$, we have $\bm{r}' = \bm{q}' \cdot u_0$ and $\bm{s}' = \bm{p}' \cdot u_0$ and, therefore, that $\bm{r}'$ is defined on $a$ while $\bm{s}'$ is not (by choice of $u_0$). On the other hand, we have $\bm{r}' \circ a = \bm{r} \circ a = \bm{s} \circ a$. By construction, if $\bm{s}$ is defined on $a$, then $\bm{s}'$ can only be undefined on $a$ if $\bm{s} \cdot a$ (which is defined) contains a state from $Z$. However, as $\bm{r}'$ is defined on $a$, $\bm{r}' \cdot a$ needs to be defined as well and, thus, $\bm{r} \cdot a$ is from $(Q \setminus Z)^+$. As discussed above, this is only possible if $\bm{r} \cdot a \not\in Z$. So, there needs to be a finite word $v \in \Sigma^*$ such that $(\bm{r} \cdot a) \circ{}\!$ is distinct to $(\bm{s} \cdot a) \circ{}\!$ on $v$, which means that $\bm{r} \circ{}\!$ is distinct to $\bm{s} \circ{}\!$ on $av$, a contradiction!

      Showing well-definedness is similar. This time, assume that there are $\bm{q}, \bm{p} \in (Q \setminus Z)^+$ with $\bm{q}' \circ{}\! = \bm{p}' \circ{}\!$ but $\bm{q} \circ{}\! \neq \bm{p} \circ{}\!$. So, $\bm{q} \circ{}\!$ and $\bm{p} \circ{}\!$ must be distinct on some finite word $u \in \Sigma^*$. Since they cannot both be undefined on $u$, we may assume $\bm{q} \circ u$ to be defined without loss of generality. If $\bm{q}' \circ{}\!$ was defined on $u$, then its value would be identical to $\bm{q} \circ u$ by construction of $\mathcal{T}'$ and so would be the value of $\bm{p}' \circ u$. In turn that means that $\bm{p} \circ u$ was defined and had the same value, which is not possible. So, we have that both, $\bm{q}' \circ{}\!$ and $\bm{p}' \circ{}\!$, are undefined on $u$. Let $u_0$ be the longest prefix of $u$ on which $\bm{q}' \circ{}\!$ and $\bm{p}' \circ{}\!$ are defined and write $u = u_0 a u_1$ for an $a \in \Sigma$ and $u_1 \in \Sigma^*$. Then, by construction, $\bm{q} \circ{}\!$ and $\bm{p} \circ{}\!$ are defined on $u_0$ as well and so are $\bm{r} = \bm{q} \cdot u_0$ and $\bm{s} = \bm{p} \cdot u_0$. Furthermore, we have $\bm{r}' = \bm{q}' \cdot u_0$ and $\bm{s}' = \bm{p}' \cdot u_0$ as well as $\bm{r}' \circ{}\! = \bm{s}' \circ{}\!$. In particular, neither $\bm{r}$ nor $\bm{s}$ contains a state from $Z$. For the next letter, we have that $\bm{r}' \circ{}\!$ and $\bm{s}' \circ{}\!$ are both undefined on $a$ while $\bm{r} \circ{}\!$ and $\bm{s} \circ{}\!$ are distinct on $a$. Remember that we assumed $\mathcal{T}$ to be an end marker extension. Therefore, we can distinguish two cases: either $a$ is already an end marker letter $a = a_\$$ or there is an end marker letter $a_\$$ corresponding to $a$. In either case, we have that $\bm{r} \circ{}\!$ and $\bm{s} \circ{}\!$ need to be distinct on $a_\$$. As end marker letters only occur on self-loops, the relevant transitions must also exist in $\mathcal{T}'$, which constitutes a contradiction since this means that $\bm{r}' \circ{}\!$ and $\bm{s}' \circ{}\!$ would need to be distinct on $a_\$$ as well.
    \end{proof}

    As a special case of \autoref{prop:AutSGClosedUnderRemovalOfZero}, we obtain the converse of \autoref{cor:SPartialImpliesS0Complete}.
    \begin{proposition}\label{prop:S0CompleteImpliesSPartial}
      If $S^0$ is a complete automaton semigroup, then $S$ is a (partial) automaton semigroup.
    \end{proposition}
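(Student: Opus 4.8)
The plan is to obtain this as an immediate corollary of the two results that precede it, rather than to construct any new automaton directly. The key observation is that the hypotheses and conclusion line up perfectly with a chain of two already-proven implications. First I would note that a complete automaton semigroup is by definition also a (partial) automaton semigroup, so the assumption that $S^0$ is a \emph{complete} automaton semigroup gives in particular that $S^0$ is a (partial) automaton semigroup. This is the only place where the word \enquote{complete} in the hypothesis is used, and it is used merely to weaken the assumption to the partial setting.

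Having reduced to the statement that $S^0$ is a (partial) automaton semigroup, I would then apply \autoref{prop:AutSGClosedUnderRemovalOfZero} verbatim: that proposition asserts exactly that if $S^0$ is a (partial) automaton semigroup, then $S$ is a (partial) automaton semigroup. Chaining the two steps yields the claim. In symbols, the argument is
\[
  S^0 \text{ complete aut.\ sg.} \implies S^0 \text{ (partial) aut.\ sg.} \implies S \text{ (partial) aut.\ sg.,}
\]
where the first implication is the trivial inclusion of complete automaton semigroups into (partial) automaton semigroups and the second is \autoref{prop:AutSGClosedUnderRemovalOfZero}.

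There is essentially no obstacle here: all the real work has already been done in \autoref{prop:AutSGClosedUnderRemovalOfZero}, whose proof handles the delicate point of removing the zero-acting states without collapsing distinct functions (this is precisely where the end marker extension from \autoref{lem:endMarkerEquivalence} is needed). The only thing to verify is that one is entitled to forget completeness of the generating automaton for $S^0$ when invoking that proposition, which is immediate since \autoref{prop:AutSGClosedUnderRemovalOfZero} makes no completeness hypothesis on the automaton generating $S^0$. Consequently the proof is a single sentence: since a complete automaton semigroup is a (partial) automaton semigroup, the claim follows directly from \autoref{prop:AutSGClosedUnderRemovalOfZero}.
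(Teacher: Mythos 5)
Your proposal is correct and matches the paper exactly: the paper presents this proposition as an immediate special case of \autoref{prop:AutSGClosedUnderRemovalOfZero}, using precisely the observation that a complete automaton semigroup is in particular a (partial) automaton semigroup. No further comment is needed.
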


    We can use \autoref{prop:S0CompleteImpliesSPartial} to prove the converse of \autoref{lem:positiveCainImpliesPositiveCompletePartial}, which shows the equivalence between \autoref{prob:CainsProblem} and \autoref{prob:CompleteAndPartial}.
    \begin{lemma}
      If \autoref{prob:CompleteAndPartial} has a positive answer, then so has \autoref{prob:CainsProblem}.
    \end{lemma}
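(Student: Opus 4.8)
The plan is to chain together the results already proved in this section; the statement is essentially a formal consequence of \autoref{prop:S0CompleteImpliesSPartial} together with the hypothesis. I would begin by assuming that \autoref{prob:CompleteAndPartial} has a positive answer, i.\,e.\ that every (partial) automaton semigroup is in fact a complete automaton semigroup. To establish the implication constituting \autoref{prob:CainsProblem}, I fix an arbitrary semigroup $S$ and suppose the hypothesis of that implication, namely that $S^0$ is a complete automaton semigroup. The goal is then to conclude that $S$ itself is a complete automaton semigroup.

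From this point the argument is a single application of \autoref{prop:S0CompleteImpliesSPartial}: since $S^0$ is a complete automaton semigroup, that proposition immediately yields that $S$ is a (partial) automaton semigroup. This is precisely where the earlier work is used, as \autoref{prop:S0CompleteImpliesSPartial} was itself obtained as a special case of \autoref{prop:AutSGClosedUnderRemovalOfZero}, the closure of (partial) automaton semigroups under removing an adjoined zero.

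Finally, I would invoke the assumed positive answer to \autoref{prob:CompleteAndPartial} and apply it to the (partial) automaton semigroup $S$: by that assumption, $S$ is then a complete automaton semigroup. This is exactly the conclusion required by the implication in \autoref{prob:CainsProblem}. Since $S$ was an arbitrary semigroup, the implication holds for all $S$, so \autoref{prob:CainsProblem} has a positive answer as well.

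I do not expect any genuine obstacle here: all of the substance is already carried by \autoref{prop:S0CompleteImpliesSPartial} (and, transitively, \autoref{prop:AutSGClosedUnderRemovalOfZero}), and this lemma merely combines it with the working hypothesis. The only point demanding care is keeping the direction of the two implications straight—using \autoref{prop:S0CompleteImpliesSPartial} to pass from \enquote{$S^0$ complete} to \enquote{$S$ partial}, and then the hypothesis to pass from \enquote{$S$ partial} to \enquote{$S$ complete}—so that together with \autoref{lem:positiveCainImpliesPositiveCompletePartial} the full equivalence of \autoref{prob:CompleteAndPartial} and \autoref{prob:CainsProblem} is obtained.
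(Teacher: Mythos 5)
Your argument is correct and is precisely the paper's own proof: apply \autoref{prop:S0CompleteImpliesSPartial} to pass from \enquote{$S^0$ complete} to \enquote{$S$ partial}, then invoke the assumed positive answer to \autoref{prob:CompleteAndPartial} to conclude that $S$ is a complete automaton semigroup. No issues.
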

    \begin{proof}
      Let $S^0$ be a complete automaton semigroup for some semigroup $S$. By \autoref{prop:S0CompleteImpliesSPartial}, $S$ is a (partial) automaton semigroup. Thus, if every (partial) automaton semigroup is a complete automaton semigroup, then so is, in particular, $S$.
    \end{proof}
  \end{section}

  \begin{section}{Non-Automaton Semigroups}
    To prove that the class of automaton semigroups is distinct to the class of complete automaton semigroups, one likely needs to disprove that some automaton semigroup is a complete automaton semigroup. However, a general tool for disproving that a semigroup is a (partial or complete) automaton semigroup still seems to be missing. Trying to extend results towards this direction, we will generalize the arguments of Cain \cite{cain2009automaton} and Brough and Cain \cite{brough2017automatonTCS} in this section to show that various semigroups arising from the free semigroup of rank $1$ are not automaton semigroups.
    
    As it will simplify our notation, we start by mentioning that, for an $S$-automaton $\mathcal{T} = (Q, \Sigma, \delta)$, every partial map ${}\!\cdot w: Q^+ \to_p Q^+$ with $w \in \Sigma^*$ can be lifted into a partial map $\mathscr{S}(\mathcal{T}) \to_p \mathscr{S}(\mathcal{T})$. This lifting is well-defined because $\bm{q} \circ{}\! = \bm{p} \circ{}\!$ implies $\bm{q} \cdot w \circ{}\! = \bm{p} \cdot w \circ{}\!$ (whenever $\!{}\cdot w$ is defined on $\bm{q}$ or $\bm{p}$).
    
    As another means to simplify the notation, we introduce \emph{cross diagrams}.\footnote{Cross diagrams seem to increase in usage lately. They seem to have been introduced in \cite{aklmp12} where the authors connect them to the square diagrams of \cite{glasner2005Automata}.} A transition $\trans{q}{a}{b}{p}$ of an automaton is depicted using the cross
    \begin{center}
      \begin{tikzpicture}
        \matrix[matrix of math nodes, text height=1.25ex, text depth=0.25ex,
          column 1/.style={anchor=base east},
          column 2/.style={anchor=base},
          column 3/.style={anchor=base west}
          ] (m) {
              & a             & \\
            q &               & q \cdot a = p \\
              & q \circ a = b & \\
          };
        \foreach \j in {1} {
          \foreach \i in {1} {
            \draw[->] let
              \n1 = {int(2+\i)},
              \n2 = {int(1+\j)}
            in
              (m-\n2-\i) -> (m-\n2-\n1);
            \draw[->] let
              \n1 = {int(1+\i)},
              \n2 = {int(2+\j)}
            in
              (m-\j-\n1) -> (m-\n2-\n1);
          };
        };
      \end{tikzpicture}
    \end{center}
    and multiple transitions can be combined into a single diagram. For example,
    \begin{center}
      \begin{tikzpicture}
        \matrix[matrix of math nodes, text height=1.25ex, text depth=0.25ex] (m) {
                 & a_1 &    & \dots &    & a_m &     \\
          q_1    &     & {} & \dots & {} &     & p_1 \\
                 & {}  &    &       &    & {}  &     \\[-1.25ex]
          \vdots & \vdots &    &       &    & \vdots & \vdots \\[-1.5ex]
                 & {}  &    &       &    & {}  &     \\
          q_n    &     & {} & \dots & {} &     & p_n \\
                 & b_1 &    & \dots &    & b_m &     \\
        };
        \foreach \j in {1, 5} {
          \foreach \i in {1, 5} {
            \draw[->] let
              \n1 = {int(2+\i)},
              \n2 = {int(1+\j)}
            in
              (m-\n2-\i) -> (m-\n2-\n1);
            \draw[->] let
              \n1 = {int(1+\i)},
              \n2 = {int(2+\j)}
            in
              (m-\j-\n1) -> (m-\n2-\n1);
          };
        };
      \end{tikzpicture}
    \end{center}
    means that $q_n \dots q_1 \circ a_1 \dots a_m = b_1 \dots b_m$ and $q_n \dots q_1 \cdot a_1 \dots a_m = p_n \dots p_q$ hold. Additionally, multiple transitions can be abbreviated. For example, we can also write
    \begin{center}
      \begin{tikzpicture}
        \matrix[matrix of math nodes, text height=1.25ex, text depth=0.25ex,
          column 1/.style={anchor=base east},
          column 2/.style={anchor=base},
          column 3/.style={anchor=base west}
        ] (m) {
                                 & u = a_1 \dots a_m & \\
          \bm{q} = q_n \dots q_1 &                   & \bm{p} = p_n \dots p_1 \\
                                 & v = b_1 \dots b_m & \\
        };
        \foreach \j in {1} {
          \foreach \i in {1} {
            \draw[->] let
              \n1 = {int(2+\i)},
              \n2 = {int(1+\j)}
            in
              (m-\n2-\i) -> (m-\n2-\n1);
            \draw[->] let
              \n1 = {int(1+\i)},
              \n2 = {int(2+\j)}
            in
              (m-\j-\n1) -> (m-\n2-\n1);
          };
        };
      \end{tikzpicture}
    \end{center}
    for the above cross diagram. Using the remark above, we can generalize cross-diagrams and write semigroup elements instead of state sequence on the left and on the right.
    
    Central to our proof(s) is the notion of near injectivity. We call a function $f: A \to B$ \emph{nearly injective} if there is some constant $C$ such that $f^{-1}(b) = \{ a \in A \mid f(a) = b \}$ is of size at most $C$ for every $b \in B$. The idea is to use nearly injective homomorphisms $\varphi: S \to T$ to deduce properties of $s$ from $\varphi(s)$ and vice-versa. In the next simple lemma, we will see that an element has torsion\footnote{Recall that an element $s$ of a semigroup has \emph{torsion} if there are $i, j > 0$ with $i \neq j$ such that $s^i = s^j$.} if and only if its image under a nearly injective homomorphism has torsion.

    \begin{lemma}\label{lem:preimagesOfTorsionElementsHaveTorsion}
      Let $S$ and $T$ be semigroups and let $\gamma: S \to T$ be a nearly injective homomorphism. Then, $s \in S$ has torsion in $S$ if and only if $\gamma(s)$ has torsion in $T$.
    \end{lemma}
    \begin{proof}
      Suppose $\gamma(s)$ has torsion for some $s \in S$. Then the subsemigroup $T' = \langle \gamma(s) \rangle$ of $T$ generated by $\gamma(s)$ is finite. Since $\gamma$ is nearly injective, $\gamma^{-1}\left( T' \right) \supseteq \{ s^i \mid i > 0 \}$ is also finite. Therefore, $s$ must have torsion by the pigeon hole principle.
    \end{proof}
    
    The general idea is now to use this result in the following way. Suppose there is a state $q$ in some $S$-automaton such that all out-going transitions from $q$ directly go to $q$ again (i.\,e.\ $q \cdot a = q$); we say that $q$ \emph{recurses} only to itself. Then, it is not difficult so see that $q \circ{}\!$ must have torsion and this is one of the main arguments in Cain's proof that the free semigroup in one generator $q^+$ is not an automaton \cite[Proposition~4.3]{cain2009automaton}. Using a slightly more elaborate argumentation, one can also show that $q \circ{}\!$ has torsion if we allow $q$ to recuse not only to itself but, additionally, also to a zero element (see \cite[Lemma~14]{brough2017automatonTCS}). In the next lemma, we will generalize this result: we will allow $q$ to recurse to arbitrary elements as long as all of these elements have the same image under some nearly injective homomorphism.
    \begin{lemma}\label{lem:sameLayerRecurringElementsHaveTorsion}
      Let $\mathcal{T} = (Q, \Sigma, \delta)$ be an $S$-automaton such that there is a nearly injective homomorphism $\gamma: S \to T$ from $S = \mathscr{S}(\mathcal{T})$ to some (arbitrary) semigroup $T$.
      
      If $s$ is an element of $S$ such that all elements in $s \cdot \Sigma^* = \{ s \cdot w \mid w \in \Sigma^*, \!{}\cdot w \text{ defined on } s \}$ have the same image $\gamma_s$ under $\gamma$, then $\gamma_s$ has torsion in $T$ and $s$ has torsion in $S$.
      
      Additionally, if $T$ contains a zero $z$ and there is an element $s \not\in Z = \gamma^{-1}(z)$ of $S$ such that all elements from $\left( s \cdot \Sigma^* \right) \setminus Z$ have the same image $\gamma_s$ under $\gamma$, then $\gamma_s$ and $s$ both have torsion.
    \end{lemma}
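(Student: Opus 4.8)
The plan is to prove that $s$ itself has torsion in $S$; the torsion of $\gamma_s$ then follows at once, since $\gamma_s = \gamma(s)$ and a homomorphism sends torsion elements to torsion elements (this is also the content of \autoref{lem:preimagesOfTorsionElementsHaveTorsion}). First I would record the two structural consequences of the hypothesis. Near injectivity gives that $R := s \cdot \Sigma^*$ is finite: by assumption $R \subseteq \gamma^{-1}(\gamma_s)$, and the latter has at most $C$ elements. Using cross diagrams together with the lifted action $\!{}\cdot w$ on $\mathscr{S}(\mathcal{T})$ and the rule $\bm{q} p \cdot u = [\bm{q} \cdot (p \circ u)](p \cdot u)$, I would then show by induction on $n$ that for every word $w$ the residual $s^n \cdot w$ factors as a product $\bm{r} = r_n \dots r_1$ of $n$ elements $r_i \in R$; applying $\gamma$ yields $\gamma(s^n \cdot w) = \gamma_s^{\,n}$ for all $n$ and all $w$ on which $s^n$ is defined.

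The engine of the argument is the recursion $s^n \circ (a u') = (s^n \circ a)\,(s^n \cdot a \circ u')$: the function $s^n \circ{}\!$ is completely determined by its first-letter action $\sigma_n\colon a \mapsto s^n \circ a$ (a partial transformation of the finite set $\Sigma$) together with the residual functions $(s^n \cdot a)\circ{}\!$. To obtain torsion it suffices to find $i < j$ with $s^i \circ{}\! = s^j \circ{}\!$, and I would do this by induction on word length once $i < j$ are chosen so that \emph{(i)} $\sigma_i = \sigma_j$ and \emph{(ii)} $(s^i \cdot a)\circ{}\! = (s^j \cdot a)\circ{}\!$ for every $a \in \Sigma$; the recursion then propagates equality from shorter to longer inputs.

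To secure such $i < j$ I would pass to the finite transducer $\mathcal{A}$ whose state set is $R$, with transitions $r \xrightarrow{a / r \circ a} r \cdot a$ inherited from $\mathcal{T}$ (well defined since $R$ is closed under $\!{}\cdot a$). Every $s^n \circ{}\!$ lies in $\mathscr{S}(\mathcal{A})$, and restricting $\gamma$ gives a nearly injective homomorphism $\mathscr{S}(\mathcal{A}) \to T$ under which every state maps to $\gamma_s$; hence each length-$n$ layer $\{\,\bm{r} \circ{}\! \mid \bm{r} \in R^n\,\}$ lands in the single fibre $\gamma^{-1}(\gamma_s^{\,n})$ and thus contains at most $C$ functions. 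Because $\sigma_n$ ranges over the finite monoid of partial transformations of $\Sigma$ and each layer is bounded by $C$, the combined data controlling $s^n \circ{}\!$ is drawn from a finite reservoir, and pigeonholing this bounded data along $n$ should produce the required repetition. The main obstacle is precisely the synchronization demanded by \emph{(ii)}: the residual elements $s^n \cdot a$ live in the \emph{growing} set $R^n$, so I must show that the induced residual functions $(s^n \cdot a)\circ{}\!$ nevertheless run through only finitely many values and recur \emph{simultaneously} with $\sigma_n$. This is where near injectivity is indispensable, collapsing each layer to at most $C$ functions and, via \autoref{lem:preimagesOfTorsionElementsHaveTorsion}, forbidding $\gamma_s$ from having infinite order once the layer structure stabilizes; I expect to formalize the stabilization by tracking, for increasing depth $k$, the restrictions $s^n \circ|_{\Sigma^{\le k}}$ (finitely many for each fixed $k$) and arguing that their periods in $n$ cannot grow indefinitely without forcing an over-full fibre of $\gamma$.

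For the second statement I would argue identically after folding the zero into the construction. Now $R := (s \cdot \Sigma^*) \setminus Z$ is finite (again contained in $\gamma^{-1}(\gamma_s)$, with $\gamma_s \neq z$ because $s \notin Z$), and I would build $\mathcal{A}$ on $R \sqcup \{ 0 \}$ with a single absorbing sink $0$ representing all residuals that enter $Z = \gamma^{-1}(z)$. Since $z$ is a zero of $T$, any product $r_n \dots r_1$ one of whose factors lies in $Z$ has $\gamma$-image $z$; thus a residual branch that ever meets $Z$ stays dead and contributes only the constant behaviour of the zero thereafter. The extra datum of whether a residual has already fallen into $Z$ is therefore monotone in the length of the input and takes just two values, so it does not enlarge the finite reservoir controlling $s^n \circ{}\!$. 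Running the same pigeonhole then yields $s^i \circ{}\! = s^j \circ{}\!$, whence both $s$ and $\gamma_s$ have torsion.
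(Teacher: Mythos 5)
Your overall strategy inverts the paper's logical order, and this inversion is where the argument breaks. You plan to find $i < j$ with $s^i \circ{}\! = s^j \circ{}\!$ directly and only then deduce that $\gamma_s$ has torsion. The pigeonhole you invoke for this does not close: the function $s^n \circ{}\!$ is determined by its first-letter action $\sigma_n$ together with the residuals $(s^n \cdot a) \circ{}\!$, and while each layer $\{ \bm{r} \circ{}\! \mid \bm{r} \in R^n \}$ has at most $C$ elements, these are $C$ elements of a \emph{different} fibre $\gamma^{-1}(\gamma_s^n)$ for each $n$. Nothing forces residuals from layer $i$ to coincide with residuals from layer $j$; indeed, if $\gamma_s$ had infinite order, all the elements $s^n$ and all their residuals could be pairwise distinct without any single fibre of $\gamma$ ever containing more than $C$ elements, so your proposed contradiction via an ``over-full fibre'' is not available. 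You correctly identify this synchronization problem as ``the main obstacle,'' but the sketch never resolves it --- the phrases ``should produce the required repetition'' and ``I expect to formalize the stabilization'' stand in for precisely the missing step. Establishing $(s^i \cdot a) \circ{}\! = (s^j \cdot a) \circ{}\!$ is essentially as hard as the equality $s^i \circ{}\! = s^j \circ{}\!$ you are trying to prove, so the induction on word length is circular as it stands.

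The paper escapes this by proving torsion of $\gamma_s$ \emph{first} and settling for much less than $s^i = s^j$. It packages each set $R_i = s^i \cdot \Sigma^*$ (of size at most $2C$) as a finite automaton $\mathcal{T}_i$ and pigeonholes over the finitely many automata with at most $2C$ states over $\Sigma$: when $\mathcal{T}_i$ and $\mathcal{T}_j$ are isomorphic, corresponding states compute identical partial functions, so \emph{some} $r_i \in Y_i$ equals \emph{some} $r_j \in Y_j$ --- not necessarily $s^i = s^j$. Since $\gamma(r_i) = \gamma_s^i$ and $\gamma(r_j) = \gamma_s^j$, this already yields $\gamma_s^i = \gamma_s^j$, and only then does \autoref{lem:preimagesOfTorsionElementsHaveTorsion} (whose nontrivial direction, requiring near injectivity, is the one your plan never uses) pull torsion back to $s$. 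Your preliminary reductions --- finiteness of $s \cdot \Sigma^*$, the factorization of $s^n \cdot w$ via cross diagrams, $\gamma(s^n \cdot w) = \gamma_s^n$, and the sink state for $Z$ in the zero case --- all match the paper, but the central pigeonhole step is missing and, as designed, cannot be completed without first switching to the weaker matching target.
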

    \begin{proof}
      We can use the same proof for both statements by setting $Z = \emptyset$ if $T$ does not contain a zero.
      
      Consider an element $s \in S$ with $s \not\in Z$ and define $Y_i = \left( s^i \cdot \Sigma^* \right) \setminus Z$. Suppose that $\gamma$ maps all elements in $Y_1$ to the same element $\gamma_s$ in $T$, then this element must be $\gamma_s = \gamma(s)$ (since $s$ is in $Y_1$ by definition). We are only going to show that $\gamma_s$ has torsion because this implies that $s$ has torsion as well by \autoref{lem:preimagesOfTorsionElementsHaveTorsion}. If we have $s^i \in Z$ for some $i > 1$, then $\gamma(s) = \gamma_s$ has torsion because $z = \gamma(s^i) = \gamma_s^i$ has torsion. So assume $s^i \not\in Z$ for all $i > 0$, which implies that all $Y_i$ are non-empty.
      
      We show $\gamma(s^i \cdot w) = \gamma_s^i$ for all $i > 0$ and all $w \in \Sigma^*$ with $\!{} \cdot w$ defined on $s^i$ and $s^i \cdot w \not\in Z$. Consider the cross diagram
      \begin{center}
        \begin{tikzpicture}
          \matrix[matrix of math nodes, text height=1.25ex, text depth=0.25ex,
            column 1/.style={anchor=base east},
            column 2/.style={anchor=base},
            column 3/.style={anchor=base west}
          ] (m) {
            & w_0 = w & \\
            s & & s \cdot w_0 \\
            & w_1 & \\
            s & & s \cdot w_1 \\
            & \raisebox{-0.75ex}{\vdots} & \\
            & w_{i - 1} & \\
            s & & s \cdot w_{i - 1} \text{.} \\
            & w_i & \\
          };
          \foreach \j in {1, 3, 6} {
            \foreach \i in {1} {
              \draw[->] let
                \n1 = {int(2+\i)},
                \n2 = {int(1+\j)}
              in
                (m-\n2-\i) -> (m-\n2-\n1);
              \draw[->] let
                \n1 = {int(1+\i)},
                \n2 = {int(2+\j)}
              in
                (m-\j-\n1) -> (m-\n2-\n1);
            };
          };
        \end{tikzpicture}
      \end{center}
      This yields $s^i \cdot w = (s \cdot w_{i - 1}) \dots (s \cdot w_{1}) (s \cdot w_{0})$. Obviously, all $s \cdot w_j$ are from $s \cdot \Sigma^*$ but notice also that none of them is in $Z$ (otherwise $s^i \cdot w$ would be in $Z$). Thus, we have $\gamma(s \cdot w_j) = \gamma_s$ for every $0 \leq j < i$ and, therefore, $\gamma(s^i \cdot w) = \gamma(s \cdot w_{i - 1}) \dots \gamma(s \cdot w_{2}) \, \gamma(s \cdot w_{1}) = \gamma_s^i$.
      
      We have proved $\gamma(Y_i) = \gamma_s^i$ for all $i > 0$, which implies $Y_i \subseteq \gamma^{-1}(\gamma_s^i)$. Since $\gamma$ is nearly injective, there is some constant $C$ such that $Z$ (remember: $Z = \gamma^{-1}(z)$ or $Z = \emptyset$) and all $\gamma^{-1}(\gamma_s^i)$ with $i \geq 1$ contain at most $C$ elements. Together, this yields that $R_i = s^i \cdot \Sigma^* \subseteq Y_i \cup Z$ has at most $2C$ elements for every $i \geq 1$. Remember that (as elements of the automaton semigroup $S$) the elements $r \in R_i$ are partial functions $\Sigma^* \to_p \Sigma^*$. For consistency in notation, we write $r \circ u$ for the image of $u \in \Sigma^*$ under $r$. With this notation, we define the (finite!) $S$-automata $\mathcal{T}_i = (R_i, \Sigma, \delta_i)$ whose transitions are given by
      \[
        \delta_i = \{ \trans{r}{a}{r \circ a}{r \cdot a} \mid r \in R_i, a \in \Sigma \text{ such that $r$ is defined on $a$} \}  \text{.}
      \]
      Clearly, the map $r \circ{}\!$ (when seeing $r$ as a state of an automaton $\mathcal{T}_i$) coincides with the map $r$ (when seeing $r$ as an element of the automaton semigroup $S$).
      
      Since there are only finitely many automata with at most $2C$ states over the alphabet $\Sigma$, there have to be $i$ and $j$ with $i \neq j$ such that $\mathcal{T}_i$ and $\mathcal{T}_j$ are the same automaton (up to renaming of the states). Since we assumed all $Y_i$ to be non-empty above, there are $r_i \in Y_i \subseteq R_i$ and $r_j \in Y_j \subseteq R_j$ such that $r_i = r_j$ (i.\,e.\ the maps coincide).\footnote{In fact, we can find a suitable $r_j$ for every $r_i \in Y_i$.} This yields $\gamma_s^i = \gamma_s^j$ since, by the argumentation above, we have $\gamma(r_i) = \gamma_s^i$ and $\gamma(r_j) = \gamma_s^j$.
    \end{proof}

    \paragraph{Direct Products.}
    Brough and Cain showed that among the subsemigroups of the free semigroup in one generator only the trivial semigroup (and -- depending on one's view point -- also the empty semigroup) is an automaton semigroup. We will generalize this proof to arbitrary direct products of the free semigroup in one generator. Later on, in \autoref{thm:semidirectProductsOfQpAreNotAutomatonSemigroups}, we will see that this can be generalized further to semidirect products. Although the following theorem is implied by this more general result, we still include a dedicated proof as a \enquote{warm-up}: the proof for semidirect products is more technical but essentially recycles the same ideas.
    \begin{theorem}\label{thm:directProductsOfQPAreNotAutomatonSemigroups}
      Let $S$ be an arbitrary non-empty semigroup and $T$ a non-empty, non-trivial subsemigroup of $\left( q^+ \right)^0$. Then, $S \times T$ is not an automaton semigroup.
    \end{theorem}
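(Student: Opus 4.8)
The plan is to assume, for contradiction, that $S \times T = \mathscr{S}(\mathcal{T})$ for some $S$-automaton $\mathcal{T} = (Q, \Sigma, \delta)$, and to manufacture a torsion element where there should be none. Since $T$ is a non-empty, non-trivial subsemigroup of $(q^+)^0$, it contains a smallest non-zero element $t_{\min}$ (and is in fact infinite, as $t_{\min}, 2 t_{\min}, \dots$ are all distinct); moreover every element of $T$ except the adjoined zero has infinite order. Consequently the projection $\pi_T \colon S \times T \to T$ is a surjective homomorphism under which every element with non-zero $T$-component maps to an element of infinite order, and each such element of $S \times T$ therefore has infinite order itself. The aim is to produce one such infinite-order element $s$ whose forward orbit $s \cdot \Sigma^*$ collapses under a nearly injective homomorphism and then to invoke \autoref{lem:sameLayerRecurringElementsHaveTorsion} to conclude that $s$ has torsion, which is the desired contradiction.

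The first ingredient is that $\cdot w$ is length-preserving on state sequences: if $x = \bm{q} \circ{}\!$ is a product of $n$ generators, then every $x \cdot w$ is again a product of $n$ generators, so $x \cdot \Sigma^*$ is a \emph{finite} set and the $T$-components of all its elements have length in the interval $[n\, t_{\min}, nM]$, where $M$ is the largest finite $T$-value occurring among the generators. In particular, the elements of minimal $T$-value $t_{\min}$ are necessarily single generators (a product of two or more generators already has $T$-value at least $2 t_{\min}$), so they form a finite subset $Q_{\min} \subseteq Q$. First I would fix such a minimal generator $q$ and study the finite forward orbit $q \cdot \Sigma^* \subseteq Q$ together with its induced $\cdot$-action, reducing the theorem to the following assertion: the orbit contains an infinite-order element whose entire own forward orbit is constant up to a nearly injective homomorphism (for instance, an element recursing only to itself, or only to itself and a zero, in the sense admitted by \autoref{lem:sameLayerRecurringElementsHaveTorsion}).

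The hard part will be establishing exactly this assertion, and it is here that the arbitrariness of $S$ bites. When $S$ is infinite, $\pi_T$ is \emph{not} nearly injective, so it cannot serve as the homomorphism $\gamma$; one is essentially forced to take $\gamma = \id$ and hence to genuinely exhibit an infinite-order state that recurses only to itself (or to the zero, in the case where $T$ contains the adjoined zero and $S$ has a zero). To produce such a state I would pass to the recurrent part of the $\cdot$-action on the finite set $Q$ (using that some word acts idempotently, so that its image states recur) and then use the torsion-freeness of the $T$-coordinate, through \autoref{lem:preimagesOfTorsionElementsHaveTorsion} and the length bounds above, to force the recurrent behaviour to collapse onto a single infinite-order element rather than a larger cycle: a non-trivial recurrent cycle of distinct finite-$T$-value elements would, through the self-similarity constraint relating a state's single-state description to its column description under $\cdot$, contradict the uniqueness of elements of a given value in $q^+$. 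Throughout, by \autoref{lem:endMarkerEquivalence} I may assume $\mathcal{T}$ is an end-marker extension, which keeps definedness under control when the automaton is only partial. Once the recurring infinite-order element is in hand, \autoref{lem:sameLayerRecurringElementsHaveTorsion} delivers the contradiction; the same scheme, with a more delicate orbit analysis, is what is later recycled for \autoref{thm:semidirectProductsOfQpAreNotAutomatonSemigroups}.
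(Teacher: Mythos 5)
There is a genuine gap. You have assembled the right tools (the projection to the $T$-component, its additivity over products of generators, and \autoref{lem:sameLayerRecurringElementsHaveTorsion}), but the step you yourself flag as ``the hard part'' is precisely the step the paper's proof turns on, and your sketch of it does not work as stated. Your length bounds give only that a product of $n$ generators has $T$-value in $[n\ell, nL]$ (writing $\ell$ for your $t_{\min}$ and $L$ for the largest finite $T$-value among the generators); this interval does not collapse, so no element's forward orbit is forced to have constant $\gamma$-value. The missing idea is to exhibit a single semigroup element that is \emph{simultaneously} a product of $k\ell$ generators and of $kL$ generators: pick $(s,L) \in Q\circ{}$ realizing the maximum, pick $k$ with $s^k$ idempotent (possible since $S$ is finite), and observe $(s,L)^{k\ell} = (s^k, k\ell L) = (s^k,\ell)^{kL}$ with $(s^k,\ell) \in Q\circ{}$. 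Every element of its orbit under $\cdot\,\Sigma^*$ then lies in $(Q\circ{})^{kL} \cap (Q\circ{})^{k\ell}$; membership in the first set forces $T$-value $\geq k\ell L$ (each of $kL$ factors contributes at least $\ell$), and membership in the second forces $T$-value $\leq k\ell L$ whenever it is finite (each of $k\ell$ factors must then be finite, hence $\leq L$). So $\gamma$ is constant on the orbit up to the zero, and \autoref{lem:sameLayerRecurringElementsHaveTorsion} applies directly. Your alternative plan --- passing to the recurrent part of the $\cdot$-action and arguing that a nontrivial recurrent cycle contradicts ``uniqueness of elements of a given value in $q^+$'' --- is not substantiated and is not needed.

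The infinite-$S$ case is also mishandled. You correctly note that every $(s,\ell)$ must belong to any generating set (a product of two or more generators has $T$-value at least $2\ell$ or $\infty$), and that the generators form a finite set; put together these two facts already show that $S \times T$ is not finitely generated when $S$ is infinite, hence not an automaton semigroup, and the case is closed. Instead you propose taking $\gamma = \id$ and ``genuinely exhibiting an infinite-order state that recurses only to itself,'' which is a dead end: there is no reason such a state should exist, and \autoref{lem:sameLayerRecurringElementsHaveTorsion} with $\gamma = \id$ asks that the whole orbit $s \cdot \Sigma^*$ be a single element, a far stronger condition than anything your recurrence argument could deliver. Once infinite $S$ is dismissed by non-finite-generation, $\gamma$ is nearly injective for free and the argument above finishes the proof.
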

    \begin{proof}
      To simplify notation, we interpret an element $q^i$ of $T$ as the natural number $i$ and use addition instead of multiplication as the semigroup operation. If $T$ contains the zero element, then we interpret it as $\infty$ accordingly. Since $T$ is non-trivial, the set $T' = T \setminus \{ \infty \}$ is non-empty and $\ell = \min T'$ is well-defined. Notice that any generating set for $S \times T$ must contain $(s, \ell)$ for every $s \in S$. Thus, if $S$ is infinite, $S \times T$ is not finitely generated and, thus, not an automaton semigroup.
      
      Therefore, let $S$ be finite. In this case, the projection $\gamma: S \times T \to \left( q^+ \right)^0$, $(s, i) \mapsto q^i$, $(s, \infty) \mapsto 0$ to the second component is a nearly injective morphism. We want to apply \autoref{lem:sameLayerRecurringElementsHaveTorsion} and assume that there is an $S$-automaton $\mathcal{T} = (Q, \Sigma, \delta)$ with $\mathscr{S}(\mathcal{T}) = S \times T$. Since $Q$ is finite and needs to contain $(s, \ell)$ for every $s \in S$, $L = \max \{ L \mid (s, L) \in Q \circ{}\!, L \neq \infty \}$ is defined and there is some $s \in S$ with $(s, L) \in Q \circ{}\!$. Since $S$ is finite, there is some $k \geq 1$ such that $s^k$ is idempotent and we have $(s^k, \ell) \in Q \circ{}\!$. Furthermore, $(s^k, \ell)^{k L} = (s^k, k \ell L) = (s, L)^{k \ell}$ implies $(s^k, \ell)^{k L} \cdot w = (s, L)^{k \ell} \cdot w$ whenever $\!{} \cdot w$ is defined on either side for some $w \in \Sigma^*$. To apply \autoref{lem:sameLayerRecurringElementsHaveTorsion}, we will show $i = k \ell L$ whenever $i \neq \infty$. Since $(s^k, \ell)$ and $(s, L)$ are both in $Q \circ{}\!$, $(t, i) = (s^k, \ell)^{k L} \cdot w = (s, L)^{k \ell} \cdot w$ must be in $(Q \circ{}\!)^{k L}$ and in $(Q \circ{}\!)^{k \ell}$.
      
      The former implies that there are $p_1, \dots, p_{kL} \in Q$ with $(t, i) = p_{kL} \circ \dots \circ p_1 \circ{}\!$. By choice of $\ell$, we have $\gamma(p_j \circ{}\!) \geq \ell$ for all $1 \leq j \leq kL$. Therefore, we have $i = \gamma(t, i) \geq k \ell L$.
      
      The latter implies that there are $q_1, \dots, q_{k \ell} \in Q$ with $(t, i) = q_{k \ell} \circ \dots \circ q_1 \circ{}\!$. Since we assumed $i \neq \infty$, we need to have $\gamma(q_j \circ{}\!) \neq \infty$ for all $1 \leq j \leq k \ell$. Thus, we have $\gamma(q_j \circ{}\!) \leq L$ by choice of $L$, which yields $i = \gamma(t, i) \leq k \ell L$.
      
      Combining this, we have $\gamma((s^k, \ell)^{k L} \cdot \Sigma^*) = \gamma((s, L)^{k \ell} \cdot \Sigma^*) = k \ell L$, which, by \autoref{lem:sameLayerRecurringElementsHaveTorsion}, constitutes a contradiction because $k \ell L$ would have to have torsion.
    \end{proof}
    
    \paragraph{Semidirect Products.}
    Let $T$ be a semigroup which acts on some other semigroup $S$ (from the left), i.\,e.\ there is a homomorphism $\alpha: T \to \operatorname{End}(S \to S)$, $t \mapsto \alpha_t$ where $\operatorname{End}(S \to S)$ is the endomorphism monoid of $S$ with composition as operation. Then, the \emph{semidirect product} $S \rtimes_\alpha T$ is the semigroup with elements $(s, t) \in S \times T$ and the operation $(s, t)(s', t') = (s \alpha_t(s'), t t')$ (where the operation in the first component is that of $S$ and in the second component is that of $T$). We simply write $S \rtimes T$ when the action of $T$ on $S$ is implicitly given.
    
    Using a proof similar to the one for \autoref{thm:directProductsOfQPAreNotAutomatonSemigroups}, we can show that a semidirect product of an arbitrary semigroup and the free semigroup in one generator cannot be an automaton semigroup.
    \begin{theorem}\label{thm:semidirectProductsOfQpAreNotAutomatonSemigroups}
      Let $S$ be an arbitrary non-empty semigroup and $T$ a non-empty, non-trivial subsemigroup of $\left( q^+ \right)^0$. Then, $S \rtimes T$ is not an automaton semigroup.
    \end{theorem}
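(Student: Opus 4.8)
The plan is to follow the proof of \autoref{thm:directProductsOfQPAreNotAutomatonSemigroups} step by step, the only genuinely new ingredient being a replacement for the identity $(s^k, \ell)^{kL} = (s, L)^{k\ell}$, which there relied on the action being trivial. As before, I write the second component additively with $\infty$ for the adjoined zero, set $T' = T \setminus \{ \infty \}$ and $\ell = \min T'$, and use the projection $\gamma\colon S \rtimes T \to \left( q^+ \right)^0$, $(s', q^i) \mapsto i$ and $(s', \infty) \mapsto \infty$, onto the second component. This $\gamma$ is a homomorphism because the second component of a product in $S \rtimes T$ is computed in $T$ alone; moreover no element $(s', q^\ell)$ is a product of two or more elements (such a product has $\gamma$-value at least $2\ell$ or $\infty$), so every generating set contains all of the $(s', q^\ell)$. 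Hence, if $S$ is infinite, $S \rtimes T$ is not finitely generated and we are done, and I may assume $S$ finite. Then $\gamma$ is nearly injective (each fibre $\gamma^{-1}(q^i)$ has at most $|S|$ elements) and the endomorphism monoid $\operatorname{End}(S \to S)$ is finite.

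Assume for contradiction $\mathscr{S}(\mathcal{T}) = S \rtimes T$ for an $S$-automaton $\mathcal{T} = (Q, \Sigma, \delta)$, let $L$ be the maximal finite value of $\gamma$ on the generating set $Q \circ{}$, and fix a generator $(s, q^L) \in Q \circ{}$; by the previous paragraph every $(s', q^\ell)$ is a generator as well. As in the direct-product proof, everything reduces to producing one element $x \notin Z = \gamma^{-1}(\infty)$ that lies simultaneously in the layers $(Q \circ{})^{kL}$ and $(Q \circ{})^{k\ell}$ for some $k \geq 1$. Since the map ${} \cdot w$ preserves the length of a state sequence, $x \cdot w$ then lies in both layers for every $w \in \Sigma^*$, so whenever $x \cdot w \notin Z$ we get $\gamma(x \cdot w) \geq kL \cdot \ell$ (each of the $kL$ factors has value at least $\ell$) and $\gamma(x \cdot w) \leq k\ell \cdot L$ (each of the $k\ell$ factors has a finite value at most $L$). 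Both bounds equal $k\ell L$, so $\gamma$ is constant on $\left( x \cdot \Sigma^* \right) \setminus Z$, and \autoref{lem:sameLayerRecurringElementsHaveTorsion} forces $k\ell L$ to have torsion in $\left( q^+ \right)^0$ --- a contradiction.

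The whole difficulty lies in constructing such an $x$, and this is the step I expect to be the main obstacle. The natural candidate $x = (s, q^L)^{k\ell}$ lies in $(Q \circ{})^{k\ell}$ automatically, but --- in contrast to the direct-product case --- its first component $s \, \alpha_{q^L}(s) \, \alpha_{q^{2L}}(s) \cdots \alpha_{q^{(k\ell-1)L}}(s)$ is not visibly the first component of a product of $kL$ minimal generators, since each factor is twisted by a different endomorphism. To force $x \in (Q \circ{})^{kL}$ I would use the finiteness of $S$ together with the relation $\alpha_{q^\ell}^{L} = \alpha_{q^L}^{\ell} = \alpha_{q^{\ell L}}$ (the exponents denoting iterated composition), which holds because $\alpha$ is a homomorphism and $q^{\ell L} = (q^\ell)^L = (q^L)^\ell \in T$. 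Writing $\Theta = \alpha_{q^{\ell L}}$ and cutting both the $k\ell$-fold maximal power and a prospective $kL$-fold minimal product into $k$ blocks (of $\ell$ and of $L$ factors, respectively), this relation collapses the $b$-th block to $\Theta^b(\Pi)$ on one side and to $\Theta^b(\xi_b)$ on the other, where $\Pi = s \, \alpha_{q^L}(s) \cdots \alpha_{q^{(\ell-1)L}}(s)$ is fixed and $\xi_b$ ranges over the set $P$ of first components of $L$-fold products of minimal generators. It then suffices to solve $\prod_{b=0}^{k-1} \Theta^b(\Pi) = \prod_{b=0}^{k-1} \Theta^b(\xi_b)$ with $\xi_b \in P$.

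If $\Pi \in P$ this is immediate (take every $\xi_b = \Pi$); in general I would let $k$ grow and exploit that the images $\operatorname{im} \alpha_{q^m}$ decrease to a stable image on which $\Theta$ acts bijectively, so that for large $k$ both products fall into this stable part of $S$ --- where the discrepancy between the $\alpha_{q^L}$- and $\alpha_{q^\ell}$-twists washes out and the two sides can be matched. Turning this into a rigorous matching, controlling the interaction of the stabilisation of the endomorphism powers with the block structure and dealing with the finitely many low-index blocks, is the technical heart of the argument and the part I expect to cost the most work.
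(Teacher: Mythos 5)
Your overall architecture is exactly the paper's: additive notation, the projection $\gamma$ as a nearly injective homomorphism, the observation that every $(s',\ell)$ must be a generator (disposing of infinite $S$), and the sandwich argument showing that any element of $(Q \circ {})^{kL} \cap (Q \circ {})^{k\ell}$ outside $Z$ has $\gamma$-value exactly $k\ell L$, feeding into \autoref{lem:sameLayerRecurringElementsHaveTorsion}. But the step you yourself flag as ``the technical heart'' --- exhibiting one element outside $Z$ lying in both layers --- is precisely where your proposal stops being a proof, and your sketched route (blocks governed by $\Theta = \alpha_{q^{\ell L}}$, letting $k$ grow, passing to a stable image on which $\Theta$ is bijective and ``matching'' the two products there) is not worked out and is not obviously repairable: you would still need to show that the specific first component of $(s,L)^{k\ell}$ is realized by \emph{some} choice of $\xi_b$ in your set $P$, and since $S$ need not be a monoid and the $\xi_b$ are twisted by different powers of $\alpha$, neither the surjectivity of $\Theta$ on its stable image nor the freedom in choosing the $x_j \in S$ visibly delivers this.

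The paper closes this gap by a direct construction rather than a matching argument. Two normalizations make it work. First, pick $k$ with $f^k = f^{2k}$ for $f = \alpha_{q^\ell}$, and arrange that $L$ is a multiple of $k\ell$ by replacing $\mathcal{T}$ with $\mathcal{T} \sqcup \mathcal{T}^{k\ell}$ (the new maximal finite generator value is $k\ell L$); then the action of $L$ is $f^{k\lambda} = f^k$, an \emph{idempotent} endomorphism, which flattens all the twists: $(s,L)^j = \left(s\, f^k(s^{j-1}), jL\right)$. Second, choose $i$ with $s^{2i\ell - 1} = s^{i\ell - 1}$, so that the enormous power $s^{i\ell L - 1}$ collapses to $s^{i\ell - 1}$. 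With these in place one writes down the explicit product
\[
  \left(s\, f^k(s^{\ell-1}), \ell\right)\left(f^{iL-1}(s^{\ell}), \ell\right)\left(f^{iL-2}(s^{\ell}), \ell\right)\cdots\left(f^{1}(s^{\ell}), \ell\right) \text{,}
\]
each factor of which is a generator because \emph{every} element with second component $\ell$ is one, and verifies by telescoping that it equals $\left(s\, f^k(s^{i\ell - 1}), i\ell L\right) = (s,L)^{i\ell}$. This puts $(s,L)^{i\ell}$ into $(Q \circ {})^{iL} \cap (Q \circ {})^{i\ell}$ and the rest of your argument goes through verbatim. So: right strategy, correct reduction, but the one genuinely new computation needed beyond \autoref{thm:directProductsOfQPAreNotAutomatonSemigroups} is missing, and the idempotency of $f^k$ (not stabilization of images) is the idea that supplies it.
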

    \begin{proof}
      Again, we represent an element $q^i$ of $T$ by the natural number $i$ and use addition instead of multiplication. If $T$ contains the zero element of $\left( q^+ \right)^0$, we denote it by $\infty$. We define $\gamma: S \rtimes T \to \left( q^+ \right)^0$, $(s, i) \mapsto q^i$, $(s, \infty) \mapsto 0$ as the projection to the second component. Please note that $\gamma$ is a homomorphism.

      Since $T$ is non-trivial, the set $T \setminus \{ \infty \}$ is non-empty and we get $\ell < \infty$ for $\ell = \min T$ (where the ordering is that of the natural numbers). Let $f: S \to S$ be the action of $\ell \in T$ on $S$. Notice that $f$ is an endomorphism of $S$. We write $f^2$ for the composition of $f$ with itself; accordingly, $f^k$ denotes the $k$-fold composition of $f$ with itself.
      
      Observe that all $(s, \ell) \in S \rtimes T$ with $s \in S$ must be in any generating set for $S \rtimes T$. Thus, $S \rtimes T$ is not finitely generated and, thus, not an automaton semigroup if $S$ is infinite. Therefore, we may assume $S$ to be finite. This yields that there is some $k$ with $f^k = f^{2k}$ and that $\gamma$ is nearly injective.
      
      Assume that $S \rtimes T$ is generated by some $S$-automaton $\mathcal{T} = (Q, \Sigma, \delta)$. We can define $L = \max \{ L \mid (s, L) \in Q \circ{}\!, L \neq \infty \}$, which is well-defined because $Q \circ{}\!$ is finite and $T$ must contain an element different to $\infty$. Note that $(s, L)^{k \ell} = (s', k \ell L)$ is in $Q^{k \ell} \circ{}\!$ where we interpret $Q^{k \ell}$ as the state set of the automaton $\mathcal{T}^{k \ell}$. Also observe that $k \ell L$ is the largest (non-infinite) value among all second components of elements in $Q^{k \ell} \circ{}\!$. Therefore, we may assume that $L$ is a multiple of $k \ell$ (since we can replace $\mathcal{T}$ by $\mathcal{T} \sqcup \mathcal{T}^{k \ell}$ otherwise), which implies that the action of $L$ on $S$ is $f^{k \lambda}$ for $\lambda = \frac{L}{k \ell}$. Since, by choice of $k$, we have $f^{k \lambda} = f^k$, the action of $L$ on $S$ is given by $f^k$.
      
      This allows us to calculate the power of $(s, L)$ in $S \rtimes T$. For $j \geq 2$, we have
      \begin{align*}
        (s, L)^j &= (s, L)^{j - 2} (s, L) (s, L) = (s, L)^{j - 3} (s, L) (s f^k(s), 2L) \\
        &= (s, L)^{j - 3} (s f^k(s) f^{2k}(s), 3L) = (s, L)^{j - 3} (s f^k(s)^2, 3L) = (s, L)^{j - 3} (s f^k(s^2), 3L) \\
        &= \dots = (s f^k(s^{j - 1}), jL) \text{.}
      \end{align*}
      Since $S$ is finite, we can choose\footnote{For example, we can choose $i = 2 \omega$ where $\omega$ is the smallest exponent such that $s^\omega$ is idempotent. Then $i$ satisfies the condition because of $s^{2 \omega - 1} s^\omega = s^{2 \omega - 1}$.} $i$ in such a way that $s^{i\ell - 1} s^{i\ell} = s^{2i\ell - 1} = s^{i\ell - 1}$. For this choice, we have
      \begin{align*}
        &\left( s f^k(s^{\ell - 1}), \ell \right) \left( f^{iL - 1}(s^{\ell}), \ell \right) \left( f^{iL - 2}(s^{\ell}), \ell \right) \cdots \left( f^{1}(s^{\ell}), \ell \right) \\
        {}={}& \left(s f^k(s^{\ell - 1}) f^{i L}(s^{\ell}), 2 \ell \right) \left( f^{iL - 2}(s^{\ell}), \ell \right) \cdots \left( f^{1}(s^{\ell}), \ell \right) \\
        {}={}& \left(s f^k(s^{\ell - 1}) f^k(s^{\ell}), 2 \ell \right) \left( f^{iL - 2}(s^{\ell}), \ell \right) \cdots \left( f^{1}(s^{\ell}), \ell \right) \\
        {}={}& \left(s f^k(s^{\ell - 1} s^{\ell}), 2 \ell \right) \left( f^{iL - 2}(s^{\ell}), \ell \right) \cdots \left( f^{1}(s^{\ell}), \ell \right) \\
        {}={}& \left(s f^k(s^{\ell - 1} s^{2\ell}), 3 \ell \right) \left( f^{iL - 3}(s^{\ell}), \ell \right) \cdots \left( f^{1}(s^{\ell}), \ell \right) \\
        {}={}& \dots = \left(s f^k(s^{\ell - 1} s^{(iL - 1)\ell}), i \ell L \right) = \left(s f^k(s^{i \ell L - 1}), i \ell L \right) = \left(s f^k(s^{i \ell - 1}), i \ell L \right) \\
        {}={}& (s, L)^{i \ell}
        \text{,}
      \end{align*}
      where we have used $f^k = f^{iL}$ (which holds because $L$ is a multiple of $k$) and $s^{i \ell L - 1} = s^{i \ell - 1}$ (which holds by choice of $i$). Since the factors in the first line of the calculation are all from $Q \circ{}\!$, we have $(s, L)^{i \ell} \in ( Q \circ{}\! )^{iL}$ and, because $(s, L)$ is from $Q \circ{}\!$ as well, we also have $(s, L)^{i \ell} \in ( Q \circ{}\! )^{i \ell}$. Together, this yields that $(s, L)^{i \ell} \cdot w$ is in $(Q \circ{}\!)^{iL} \cap (Q \circ{}\!)^{i \ell}$ whenever ${}\!\cdot w$ for $w \in \Sigma^*$ is defined on $(s, L)^{i \ell}$.
      
      We will conclude by showing that we have $j \in \{ i \ell L, \infty \}$ for all $(r, j) \in (Q \circ{}\!)^{iL} \cap (Q \circ\nobreak{}\nobreak\!\nobreak)^{i \ell}$. This allows us to apply \autoref{lem:sameLayerRecurringElementsHaveTorsion} since we have $\gamma((s, L)^{i \ell} \cdot \Sigma^*) \in \{ 0, q^{i \ell L} \}$ and we get a contradiction because $q^{i \ell L}$ would have to have torsion in $\left( q^+ \right)^0$.
      
      So, let $(r, j) \in (Q \circ{}\!)^{iL} \cap (Q \circ{}\!)^{i \ell}$ be arbitrary with $j \neq \infty$. We have to show $j = i \ell L$. We can write $(r, j) = (r_{iL}, j_{iL}) \dots (r_1, j_1)$ for $(r_1, j_1), \dots, (r_{iL}, j_{iL}) \in Q \circ{}\!$ and obtain $j = j_{iL} + \dots + j_1$. By choice of $\ell$, we have $j_1, \dots, j_{iL} \geq \ell$ and obtain $j \geq i \ell L$. On the other hand, we can also write $(r, j) = (R_{i \ell}, J_{i \ell}) \dots (R_1, J_1)$ for $(R_1, J_1), \dots, (R_{i \ell}, J_{i \ell}) \in Q \circ{}\!$ because of $(r, j) \in (Q \circ{}\!)^{i \ell}$. This yields $j = J_{i \ell} + \dots + J_1$. Notice that we need to have $J_1, \dots, J_{i \ell} \neq \infty$ because we have $j \neq \infty$. By choice of $L$, we obtain $J_1, \dots, J_{i \ell} \leq L$ and, thus, $j \leq i \ell L$, which concludes our proof.
    \end{proof}
  \end{section}

  \begin{section}{Inverse Automaton Semigroups and Automaton-Inverse Semigroups}\label{sec:inverseSemigroups}
    \paragraph*{Inverse Semigroups.}
    An element $\inverse{s}$ of a semigroup $S$ is called \emph{inverse} to $s \in S$ if $s \inverse{s} s = s$ and $\inverse{s} s \inverse{s} = \inverse{s}$. Notice that this definition of a (semigroup) inverse is different to the definition of a group inverse but that every group inverse is in particular a (semigroup) inverse. A semigroup is an \emph{inverse semigroup} if every element $s \in S$ has a unique inverse $\inverse{s} \in S$. Note that, for semigroup inverses, existence does not imply uniqueness.
    
    Just like groups are closely related to bijective functions, inverse semigroups are related to partial one-to-one functions. A partial function $f: A \to_p B$ is called \emph{one-to-one}\footnote{We reserve the term \emph{injective} for total functions.} if $f(a_1) \neq f(a_2)$ for all $a_1, a_2 \in \dom f$ with $a_1 \neq a_2$. Similar to the definition above, a partial function $\inverse{f}: B \to_p A$ is \emph{inverse} to $f$ if $\dom f = \im \inverse{f}$, $\im f = \dom \inverse{f}$, and $f(\inverse{f}(f(a))) = f(a)$ for all $a \in \dom f$ as well as $\inverse{f}(f(\inverse{f}(b))) = \inverse{f}(b)$ for all $b \in \im f$. Notice that, if a partial function has an inverse partial function, then this inverse is unique.
    
    \paragraph*{Automaton-Inverse Semigroups and Automaton Groups.}
    For any automaton $\mathcal{T} = (Q, \Sigma, \delta)$, we can define the \emph{inverse} automaton $\inverse{\mathcal{T}} = (\inverse{Q}, \Sigma, \inverse{\delta})$ where $\inverse{Q}$ is a disjoint copy of $Q$ and the transitions are given by
    \[
      \inverse{\delta} = \left\{ \trans{\inverse{q}}{b}{a}{\inverse{p}} \mid \trans{q}{a}{b}{p} \in \delta \right\} \text{.}
    \]
    If we have
    \[
      \left| \left\{ \trans{q}{a}{b}{p} \mid \trans{q}{a}{b}{p} \in \delta, a \in \Sigma, p \in Q \right\} \right| \leq 1
    \]
    for every state $q \in Q$ and every letter $a \in \Sigma$ of some automaton $\mathcal{T} = (Q, \Sigma, \delta)$, then its inverse $\inverse{\mathcal{T}}$ is deterministic and we say that $\mathcal{T}$ is \emph{invertible}.
    
    For an invertible $S$-automaton $\mathcal{T} = (Q, \Sigma, \delta)$, we have $\inverse{q} \circ v = u$ if and only if $q \circ u = v$ for $u, v \in \Sigma^*$. Therefore, $\inverse{q} \circ{}\!$ is the inverse of $q \circ{}\!$ (as a partial function) and
    \[
      \inverse{\mathscr{S}}(\mathcal{T}) = \mathscr{S}(\mathcal{T} \cup \inverse{\mathcal{T}})
    \]
    is an inverse semigroup. We say that $\inverse{\mathscr{S}}(\mathcal{T})$ is the \emph{inverse semigroup generated} by the invertible $S$-automaton $\mathcal{T}$. Accordingly, we say that an invertible $S$-automaton is an \emph{$\inverse{S}$-automaton}. A semigroup is an \emph{automaton-inverse semigroup} if it is the inverse semigroup generated by some $\inverse{S}$-automaton.
    
    \begin{remark}
      Notice that there is a difference between an automaton-inverse semigroup and an inverse automaton semigroup. In this section, we will see, however, that the two notions coincide.
    \end{remark}
    
    If an $\inverse{S}$-automaton $\mathcal{T} = (Q, \Sigma, \delta)$ is complete, we have that $\inverse{q}q \circ{}\! = q \inverse{q} \circ{}\!$ is the identity and that $\inverse{\mathscr{S}}(\mathcal{T})$ is a group. To emphasize this fact, we use the notation $\mathscr{G}(\mathcal{T}) = \inverse{\mathscr{S}}(\mathcal{T})$ in this case and call $\mathcal{T}$ a \emph{$G$-automaton}. An automaton group is a group which is equal to $\mathscr{G}(\mathcal{T})$ for some $G$-automaton $\mathcal{T}$.
    
    \begin{example}
      Let $\mathcal{T}$ be the $\inverse{S}$-automaton
      \begin{center}
        \begin{tikzpicture}[auto, shorten >=1pt, >=latex, baseline=(q.base)]
          \node[state] (q) {$q$};
          \path[->] (q) edge[loop right] node {$a/b$} (q);
        \end{tikzpicture}.
      \end{center}
      The union of $\mathcal{T}$ with its inverse
      \begin{center}
        \begin{tikzpicture}[auto, shorten >=1pt, >=latex, baseline=(q.base)]
          \node[state] (q) {$\inverse{q}$};
          \path[->] (q) edge[loop right] node {$b/a$} (q);
        \end{tikzpicture}
      \end{center}
      is the automaton from \autoref{ex:B2semigroup}, which generates $B_2$. Thus, $\inverse{\mathscr{S}}(\mathcal{T})$ is $B_2$.
    \end{example}
    \begin{example}
      Recall the adding machine
      \begin{center}
        \begin{tikzpicture}[auto, shorten >=1pt, >=latex, baseline=(+1.base)]
          \node[state] (+1) {$+1$};
          \node[state, right=of +1] (+0) {$+0$};
        
          \path[->] (+1) edge[loop left] node {$1/0$} (+1)
                         edge node {$0/1$} (+0)
                    (+0) edge[loop right] node[align=left] {$0/0$\\$1/1$} (+0)
          ;
        \end{tikzpicture}.
      \end{center}
      from \autoref{ex:addingMachineSemigroup}, which we want to denote by $\mathcal{T}$ in this example. It is a $G$-automaton whose inverse is
      \begin{center}
        \begin{tikzpicture}[auto, shorten >=1pt, >=latex, baseline=(+1.base)]
          \node[state] (+1) {$\inverse{+1}$};
          \node[state, right=of +1] (+0) {$\inverse{+0}$};
          
          \path[->] (+1) edge[loop left] node {$0/1$} (+1)
                         edge node {$1/0$} (+0)
                    (+0) edge[loop right] node[align=left] {$0/0$\\$1/1$} (+0)
          ;
        \end{tikzpicture}.
      \end{center}
      Remember that the action of $+1$ was to add $1$ to the input word seen as a binary representation of a natural number. Thus, the action of the inverse $\inverse{+1}$ is to subtract $1$ from the binary representation. Notice that we have $\inverse{+1} \circ 0^\ell = 1^\ell$; we can, therefore, consider the operations as calculating modulo $2^\ell$. Accordingly, the group $\mathscr{G}(\mathcal{T})$ generated by the adding machine (as a $G$-automaton) is the group of integers with addition or, in other words, the free group of rank $1$.
    \end{example}
    \begin{example}
      In this example, we are going to modify\footnote{The modification is inspired by \cite[Fig.~8]{olijnyk2010inverse}; see also \cite[Example~2]{decidabilityPart}.} the adding machine to see that the free inverse monoid in one generator is an automaton-inverse semigroup. Free inverse semigroups and monoids allow for a nice graph presentation of their elements (see \cite[VIII.3]{petrich1984} or \cite[Example~5.10.7]{howie}). For the free inverse monoid in the generator $q$, this presentation means that the elements can be presented by
      \begin{center}
        \begin{tikzpicture}[auto]
          \node[circle, fill, minimum width=1.5mm, inner sep=0pt, label=$0$] (0) {};
          \node[circle, fill, minimum width=1.5mm, inner sep=0pt, label=$-1$, left=of 0] (-1) {};
          \node[left=of -1] (ldots) {$\dots$};
          \node[circle, fill, minimum width=1.5mm, inner sep=0pt, label=$-m$, left=of ldots] (-m) {};
          \node[circle, fill, minimum width=1.5mm, inner sep=0pt, label=$1$, right=of 0] (1) {};
          \node[right=of 1] (rdots1) {$\dots$};
          \node[circle, draw, minimum width=1.5mm, inner sep=0pt, label=$k$, right=of rdots1] (k) {};
          \node[right=of k] (rdots2) {$\dots$};
          \node[circle, fill, minimum width=1.5mm, inner sep=0pt, label=$n$, right=of rdots2] (n) {};
          
          \path (-m) edge[->-=.5] node[swap] {$q$} (ldots)
                (ldots) edge[->-=.5] node[swap] {$q$} (-1)
                (-1) edge[->-=.5] node[swap] {$q$} (0)
                (0) edge[->-=.5] node[swap] {$q$} (1)
                (1) edge[->-=.5] node[swap] {$q$} (rdots1)
                (rdots1) edge[->-=.5] node[swap] {$q$} (k)
                (k) edge[->-=.5] node[swap] {$q$} (rdots2)
                (rdots2) edge[->-=.5] node[swap] {$q$} (n)
          ;
        \end{tikzpicture}
      \end{center}
      for $m, n \in \{ 0, 1, \dots \}$ and $-m \leq k \leq n$. Thus, every element can be written as $\inverse{q}^m q^{m + n} \inverse{q}^{n - k}$.
      
      Next, we extend the adding machine into the automaton
      \begin{center}
        \begin{tikzpicture}[auto, shorten >=1pt, >=latex]
          \node[state] (+1) {$+1$};
          \node[state, right=of +1] (+0) {$+0$};
          
          \path[->] (+1) edge[loop left] node {$1/0$} (+1)
                         edge node {$0/1$} (+0)
                         edge node[swap] {$\hat{0}/\hat{1}$} (+0)
                    (+0) edge[loop right] node[align=left] {$0/0$\quad$\hat{0}/\hat{0}$\\$1/1$\quad$\hat{1}/\hat{1}$} (+0)
          ;
        \end{tikzpicture},
      \end{center}
      whose inverse is
      \begin{center}
        \begin{tikzpicture}[auto, shorten >=1pt, >=latex]
          \node[state] (+1) {$\inverse{+1}$};
          \node[state, right=of +1] (+0) {$\inverse{+0}$};

          \path[->] (+1) edge[loop left] node {$0/1$} (+1)
                         edge node {$1/0$} (+0)
                         edge node[swap] {$\hat{1}/\hat{0}$} (+0)
                    (+0) edge[loop right] node[align=left] {$0/0$\quad$\hat{0}/\hat{0}$\\$1/1$\quad$\hat{1}/\hat{1}$} (+0)
          ;
        \end{tikzpicture}.
      \end{center}
      We will show that ${+1}^{n - k} \, \inverse{+1}^{m + n} \, {+1}^m \circ{}\!$ is different to ${+1}^{n' - k'} \, \inverse{+1}^{m' + n'} \, {+1}^{m'} \circ{}\!$ whenever $m \neq m'$, $k \neq k'$ or $n \neq n'$ (for $-m \leq k \leq n$). Because taking the inverse is a bijection\footnote{In fact, it is an anti-isomorphism.} and with the above considerations, this yields that the modified adding machine generates a free inverse monoid in the generator $+1$.
      
      Before we actually do this, we fix some notation: for a natural number $0 \leq i < 2^\ell$, let $\revbin_\ell(i)$ denote the reverse/least significant bit first binary presentation with length $\ell$ of $i$ (i.\,e.\ we possibly add trailing $0$ to obtain length $\ell$). For an integer $i$ outside the interval $[0, 2^\ell)$, we define $\revbin_\ell(i)$ to be the same as $\revbin_\ell(i')$ where $i'$ is the smallest non-negative representative of the congruence class of $i$ modulo $2^\ell$ (i.\,e.\ $0 \leq i' < 2^\ell$ and $i$ and $i'$ are congruent modulo $2^\ell$). Note that we have $\inverse{+1} \circ \revbin_\ell(0) = \revbin_\ell(-1)$ and ${+1} \circ \revbin_\ell(2^\ell - 1) = \revbin_\ell(0)$ with this definition.
      
      The first case is that $k = - (m - m - n + n - k) \neq - (m' - m' -n' + n' - k') = k'$ holds. We can choose $\ell$ large enough so that $k$ and $k'$ are different modulo $2^\ell$. Then, we have
      \begin{align*}
        {+1}^{n - k} \, \inverse{+1}^{m + n} \, {+1}^m \circ \revbin_\ell(0) &= +1^{n - k} \, \inverse{+1}^{m + n} \circ \revbin_\ell(m) = +1^{n - k} \circ \revbin_\ell(-n)\\
        &= \revbin_\ell(-k)
      \intertext{and, similarly,}
        {+1}^{n' - k'} \, \inverse{+1}^{m' + n'} \, {+1}^{m'} \circ \revbin_\ell(0) &= \revbin_\ell(-k') \text{.}
      \end{align*}
      Since $-k$ and $-k'$ are different modulo $2^\ell$, so are $\revbin_\ell(-k)$ and $\revbin_\ell(-k')$ and we are done.
      
      In the second case ($k = k'$ but) $m \neq m'$, we can assume $m < m'$ without loss of generality. We have
      \begin{align*}
        {+1}^m \circ \revbin_\ell(-1 - m) \hat{1} &= \revbin_\ell(-1) \hat{1} = 1^\ell \hat{1}
      \end{align*}
      where $\ell$ is chosen large enough so that we are always in the state $+0$ before the automaton reaches the last latter, which is $\hat{1}$ (i.\,e.\ we choose $\ell$ large enough so that we do not get an overflow). Notice that ${+1} \circ{}\!$ is undefined on $1^\ell \hat{1}$ and that, thus, so is ${+1}^{n' - k'} \, \inverse{+1}^{m' + n'} \, {+1}^{m'} \circ{}\!$ on $\revbin_\ell(-1 -m)$. All that remains to show is that ${+1}^{n - k} \, \inverse{+1}^{m + n} \circ{}\!$ is defined on $1^\ell \hat{1} = \revbin_\ell(-1) \hat{1}$. We have
      \[
        {+1}^{n - k} \, \inverse{+1}^{m + n} \circ \revbin_\ell(-1) \hat{1} = {+1}^{n - k} \circ \revbin_\ell(- 1 - m - n) = \revbin_\ell(-1 -m -k) \hat{1}
      \]
      where we can again assume $\ell$ to be large enough so that we are always in state ${+0}$ or $\inverse{+0}$ before we read $\hat{1}$.
      
      The third and final case is ($k = k'$,) $m = m'$ but $m + n \neq m' + n'$ or, equivalently, $n \neq n'$. Again, we may assume $n < n'$ without loss of generality. We have
      \[
        \inverse{+1}^{m + n} {+1}^m \circ \revbin_\ell(n) \hat{0} = \inverse{+1}^{m + n} \revbin_\ell(n + m) \hat{0} = \revbin_\ell(0) \hat{0} = 0^\ell \hat{0}
      \]
      where we assume $\ell$ large enough to prevent over and underflows. Since $\inverse{+1} \circ{}\!$ is undefined on $0^\ell \hat{0}$, we also have that ${+1}^{n' - k'} \, \inverse{+1}^{m' + n'} \, {+1}^{m'} \circ{}\!$ is undefined on $\revbin_\ell(n) \hat{0}$. Notice that we have, on the other hand, ${+1}^{n - k} \circ 0^\ell \hat{0} = \revbin_\ell(n - k) \hat{0}$ (for $\ell$ large enough) and, thus, that ${+1}^{n - k} \, \inverse{+1}^{m + n} \, {+1}^m \circ{}\!$ is defined on $\revbin_\ell(n) \hat{0}$.
    \end{example}

    \paragraph*{Inverse Automaton Semigroups are Automaton-Inverse Semigroups.}
    Cain showed that a complete automaton semigroup is an automaton group if and only if it is a group \cite[Proposition~3.1]{cain2009automaton}. In a similar way, we will show that a semigroup is an automaton-inverse semigroup if and only if it is an inverse automaton semigroup. For this, we need to introduce some common definitions related to inverse semigroups.

    For any set $X$, let $P_X$ denote the semigroup of partial functions $X \to_p X$ whose binary operation is the usual composition of partial maps: $(f \circ g)(x) = f(g(x))$. By $I_X$, we denote the subsemigroup of all one-to-one partial functions $X \to_p X$ in $P_X$, which is the \emph{symmetric inverse semigroup} on $X$.

    With these definitions in hand, we can give a variation of the proof for the Preston-Vagner Theorem \cite[p.~150]{howie}, \cite[p.~168]{petrich1984} to show a generalized version of it. For consistency with the notation in the rest of the paper, we write $f \circ x$ instead of $f(x)$ for elements $f \in P_X$ here.
    \begin{lemma}\label{lem:PrestonVagner}
      Let $S$ be an inverse semigroup of partial mappings $X \to_p X$ for some set $X$ and, for every $s \in S$, let $\varphi_s$ be the restriction of $s$ to elements from $\inverse{s} \circ X$:
      \begin{align*}
        \varphi_s: \inverse{s} \circ X &\to s \circ X\\
        \inverse{s} \circ x &\mapsto s \inverse{s} \circ x
      \end{align*}
      Then, all $\varphi_s$ are one-to-one and $
      \begin{aligned}[t]
        \varphi: S &\to I_X\\
        s &\mapsto \varphi_s
      \end{aligned}$\\
      is an injective homomorphism.
    \end{lemma}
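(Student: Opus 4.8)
The plan is to recognise that $\varphi_s$ is nothing but the restriction of the partial map $s$ to the subset $\inverse{s} \circ X = \im \inverse{s}$ of $X$: for $y = \inverse{s} \circ x$ we have $\varphi_s \circ y = s \inverse{s} \circ x = s \circ y$, so $\varphi_s = s|_{\im \inverse{s}}$. First I would check well-definedness, i.\,e.\ that $s$ is defined on all of $\im \inverse{s}$, which is exactly the inclusion $\im \inverse{s} \subseteq \dom s$. This follows from $\inverse{s} s \inverse{s} = \inverse{s}$: evaluating this identity at a point $x \in \dom \inverse{s}$ forces $s$ to be defined on $\inverse{s} \circ x$. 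To see that $\varphi_s$ is one-to-one, I would take $y_1, y_2 \in \im \inverse{s}$ with $s \circ y_1 = s \circ y_2$, apply $\inverse{s}$, and use $\inverse{s} s \inverse{s} = \inverse{s}$ to get $y_i = \inverse{s} s \circ y_i$, whence $y_1 = y_2$. Thus each $\varphi_s$ lies in $I_X$.

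Next I would prove that $\varphi$ is a homomorphism, i.\,e.\ $\varphi_s \varphi_t = \varphi_{st}$ in $I_X$. On their common domain both sides send $x$ to $s \circ (t \circ x)$, so the real work is matching the two domains. Here I would invoke the standard inverse-semigroup identity $\inverse{st} = \inverse{t}\,\inverse{s}$, giving $\dom \varphi_{st} = \im \inverse{st} = \im(\inverse{t}\,\inverse{s})$. The inclusion $\dom(\varphi_s \varphi_t) \subseteq \dom \varphi_{st}$ is direct: if $x \in \im \inverse{t}$ and $t \circ x \in \im \inverse{s}$, then $x = \inverse{t} t \circ x = \inverse{t} \circ (t \circ x) \in \im(\inverse{t}\,\inverse{s})$. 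For the reverse inclusion I would rewrite $t\,\inverse{t}\,\inverse{s}$ using the commutativity of idempotents: for the idempotent $e = t \inverse{t}$ one checks that $s e \inverse{s}$ is again idempotent and that $e \inverse{s} = \inverse{s}(s e \inverse{s})$, which places $t \circ (\inverse{st} \circ w)$ inside $\im \inverse{s}$ and hence $\inverse{st} \circ w$ inside $\dom(\varphi_s\varphi_t)$.

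Finally, and this is where I expect the main difficulty to lie, I would prove injectivity of $\varphi$. The subtlety is that the elements of $S$ need \emph{not} be one-to-one as partial maps (the generators realising $B_2$ above are not), so $s$ carries genuinely more information than its restriction $\varphi_s$ and cannot simply be read off from it. My plan is to reconstruct $s$ from $\varphi_s$ together with the idempotent $\inverse{s} s$, driven by two observations. The first is an \emph{idempotent lemma}: $\varphi$ sends an idempotent $e$ to the partial identity $\id_{\im e}$, and any two idempotents of $S$ with the same image coincide — since commuting idempotents fix their common image pointwise, the relation $ef = fe$ forces $e$ and $f$ to agree on all of their (equal) domains. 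The second is that $s = \varphi_s \circ (\inverse{s} s)$ as partial maps: one has $\dom s = \dom(\inverse{s} s)$ because $s \circ x \in \dom \inverse{s}$ for every $x \in \dom s$, and on this domain $\varphi_s \circ (\inverse{s} s \circ x) = s \inverse{s} s \circ x = s \circ x$.

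Combining the two, I would argue as follows. From the homomorphism property, $\varphi_{\inverse{s} s} = \inverse{\varphi_s} \varphi_s = \id_{\dom \varphi_s}$, so that $\varphi_s = \varphi_t$ yields $\im(\inverse{s} s) = \dom \varphi_s = \dom \varphi_t = \im(\inverse{t} t)$; the idempotent lemma then gives $\inverse{s} s = \inverse{t} t$. The reconstruction now finishes the proof, since $s = \varphi_s \circ (\inverse{s} s) = \varphi_t \circ (\inverse{t} t) = t$. The homomorphism step and the idempotent lemma are the two pieces I would handle most carefully; the remainder is bookkeeping with the defining identities $s \inverse{s} s = s$ and $\inverse{s} s \inverse{s} = \inverse{s}$.
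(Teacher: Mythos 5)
Your proposal is correct, and for the first two parts (each $\varphi_s$ being one-to-one, and $\varphi$ being a homomorphism by matching $\dom(\varphi_s \circ \varphi_t)$ with $\dom \varphi_{st}$ via commutativity of idempotents) it runs essentially parallel to the paper's proof. Where you genuinely diverge is injectivity. The paper proves it by a direct pointwise chain: from $\varphi_s = \varphi_t$ it deduces $\varphi_{\inverse{s}} = \inverse{\varphi_s} = \varphi_{\inverse{t}}$ and then computes $s \circ x = s\inverse{s}s \circ x = s\inverse{t}s\circ x = t\inverse{t}s \circ x = \dots = t \circ x$, substituting one map for the other wherever the argument happens to lie in the relevant restricted domain and finishing with the observation that $\inverse{t}s$ and $\inverse{t}t$ are commuting idempotents. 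You instead factor the argument through two structural facts: that an idempotent of $S$ is a partial map fixing its image pointwise and is therefore determined, among the commuting idempotents of $S$, by its image alone (since $\im e = \im f$ gives $ef = f$ and $fe = e$, whence $e = f$); and that $s$ can be reconstructed as $\varphi_s \circ (\inverse{s}s)$ because $\dom s = \dom(\inverse{s}s)$. This buys a cleaner explanation of \emph{why} injectivity holds --- $\varphi_s$ determines $\dom\varphi_s = \im(\inverse{s}s)$, hence the idempotent $\inverse{s}s$, hence $s$ --- at the cost of two auxiliary lemmas; the paper's version is shorter on the page, but its individual substitutions require the same domain bookkeeping that you make explicit. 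The only presentational gap is that you invoke $\varphi_{\inverse{s}} = \inverse{\varphi_s}$ without proof (the paper checks it explicitly); it is a one-line verification, or can be bypassed entirely since $\varphi_{\inverse{s}s} = \id_{\im(\inverse{s}s)}$ already follows from your idempotent observation together with $\im(\inverse{s}s) = \im\inverse{s} = \dom\varphi_s$.
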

    \begin{proof}
      First, we show that $\varphi_s$ is one-to-one for every $s \in S$. Suppose we have $s \inverse{s} \circ x = s \inverse{s} \circ y$ for two elements $x, y \in X$. Then, we have $\inverse{s} \circ x = \inverse{s} s \inverse{s} \circ x = \inverse{s} s \inverse{s} \circ y = \inverse{s} \circ y$.

      Next, we note that $s \inverse{s} \circ X = s \circ X$ because of $s \inverse{s} \circ X = s \circ (\inverse{s} \circ X) \subseteq s \circ X = s \inverse{s} s \circ X = s \inverse{s} \circ (s \circ X) \subseteq s \inverse{s} \circ X$ and that, symmetrically, $\inverse{s} s \circ X = \inverse{s} \circ X$.

      This implies  $\im \varphi_s = s \circ X$, which allows us to show that, for every $s \in S$, the (unique) inverse mapping $\inverse{\varphi_s}$ of $\varphi_s$ is $\varphi_{\inverse{s}}$. We have $\dom \inverse{\varphi_s} = \im \varphi_s = s \circ X = \dom \varphi_{\inverse{s}}$ and, for all $s \inverse{s} \circ x \in \im \varphi_s = s \inverse{s} \circ X = s \circ X$, also $\inverse{\varphi_s}(s \inverse{s} \circ x) = \inverse{s} \circ x = \inverse{s} s \inverse{s} \circ x = \varphi_{\inverse{s}}(s \inverse{s} \circ x)$.

      To show that $\varphi$ is a homomorphism, we need to show $\varphi_{st} = \varphi_s \circ \varphi_t$. Notice that we have $\dom (\varphi_s \circ \varphi_t) = \inverse{\varphi_t} ( \im \varphi_t \cap \dom \varphi_s ) = \varphi_{\inverse{t}} ( t \circ X \cap \inverse{s} \circ X ) = \inverse{t} \circ ( t \circ X \cap \inverse{s} \circ X ) = \inverse{t} \circ ( t\inverse{t} \circ X \cap \inverse{s}s \circ X )$. For the inner part, we claim that we have $t\inverse{t} \circ X \cap \inverse{s}s \circ X = \inverse{s}s t\inverse{t} \circ X = t\inverse{t} \inverse{s}s \circ X$ where the last equality holds because idempotents commute in inverse semigroups\footnote{See e.\,g.\ \cite[Theorem~5.1.1, p.~145]{howie}}. This equality also shows the inclusion $t\inverse{t} \circ X \cap \inverse{s}s \circ X \supseteq \inverse{s}s t\inverse{t} \circ X = t\inverse{t} \inverse{s}s \circ X$. For the other inclusion, let $x \in t\inverse{t} \circ X \cap \inverse{s}s \circ X$. Thus, there are $y, z \in X$ with $x = \inverse{s}s \circ y = t\inverse{t} \circ z$. We have $x = \inverse{s}s \circ y = \inverse{s}s \inverse{s}s \circ y = \inverse{s}s \circ x = \inverse{s}s t\inverse{t} \circ z \in \inverse{s}s t\inverse{t} \circ X$. This shows the claim and we can continue with $\dom (\varphi_s \circ \varphi_t) = \inverse{t} \circ ( t\inverse{t} \circ X \cap \inverse{s}s \circ X ) = \inverse{t} \circ ( \inverse{s}s \circ t\inverse{t} \circ X ) = \inverse{t} \circ ( \inverse{s}s \circ t \circ X ) = \inverse{t} \inverse{s} st \circ X = \inverse{st} st \circ X = \inverse{st} \circ X = \dom \varphi_{st}$. Equality of the values of $\varphi_{st}$ and $\varphi_s \circ \varphi_t$ on elements from $\dom \varphi_{st} = \dom (\varphi_s \circ \varphi_t)$ is trivial, which shows the homomorphism property.

      It remains to show the injectivity of $\varphi$. We only need to show $\varphi_s = \varphi_t \implies s \circ x = t \circ x$ for all $x \in X$. Notice that, if we have $\varphi_s = \varphi_t$, then we also have $\varphi_{\inverse{s}} = \inverse{\varphi_s} = \inverse{\varphi_t} = \varphi_{\inverse{t}}$. Thus, we have $s \circ x = s \inverse{s} s \circ x = s \inverse{t} s \circ x = t \inverse{t} s \circ x = (t \inverse{t} t) \inverse{t} s \circ x$ for all $x \in X$. Because of $\inverse{t} s \inverse{t} s \circ x = \inverse{t} t \inverse{t} s \circ x = \inverse{t} s \circ x$, we have that $\inverse{t} s$ is idempotent and so is $\inverse{t} t$. As idempotents in inverse semigroups commute, we get $s \circ x = t \inverse{t} t \inverse{t} s \circ x = t \inverse{t} s \inverse{t} t \circ x = t \inverse{t} t \inverse{t} t \circ x = t \circ x$.
    \end{proof}

    Interestingly, we can realize this restriction in an $S$-automaton generating an inverse semigroup to obtain an $\inverse{S}$-automaton generating the same semigroup, which gives us the following theorem.
    \begin{theorem}
      A semigroup is an inverse automaton semigroup if and only if it is an automaton-inverse semigroup.
    \end{theorem}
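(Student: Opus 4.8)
The plan is to prove the two implications separately; one is immediate and the other carries the content. For the easy direction, observe that an automaton-inverse semigroup is by definition $\inverse{\mathscr{S}}(\mathcal{T}) = \mathscr{S}(\mathcal{T} \cup \inverse{\mathcal{T}})$ for an $\inverse{S}$-automaton $\mathcal{T}$. Since $\mathcal{T}$ is invertible, $\inverse{\mathcal{T}}$ is deterministic, and as $\mathcal{T}$ and $\inverse{\mathcal{T}}$ have disjoint state sets, their union is a deterministic $S$-automaton. Hence $\inverse{\mathscr{S}}(\mathcal{T})$ is an automaton semigroup which, as already noted above, is an inverse semigroup; so it is an inverse automaton semigroup.

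For the converse, let $S = \mathscr{S}(\mathcal{T})$ with $\mathcal{T} = (Q, \Sigma, \delta)$ be an inverse semigroup. First I would apply \autoref{lem:PrestonVagner} with $X = \Sigma^*$ to embed $S$ into $I_{\Sigma^*}$ via $s \mapsto \varphi_s$, where $\varphi_s$ is the restriction of $s$ to $\inverse{s} \circ \Sigma^*$. The aim is to realize the one-to-one generators $\varphi_{q \circ{}\!}$ by an invertible automaton. The key preliminary observation is that, because the maps in $S$ are length-preserving and prefix-compatible, the domain $\dom \varphi_s = \inverse{s}s \circ \Sigma^*$ coincides with the fixed-point set $\{ u \mid \inverse{s}s \circ u = u \}$ of the idempotent $\inverse{s}s$ and is therefore prefix-closed; this is exactly what makes the restriction recognizable letter by letter.

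Next I would define $\mathcal{T}' = (Q', \Sigma, \delta')$ with $Q'$ a disjoint copy of $Q$ and keep precisely the transitions belonging to the restriction: put $\trans{q'}{a}{b}{p'} \in \delta'$ if and only if $\trans{q}{a}{b}{p} \in \delta$ and $\inverse{s}s \circ a = a$ for $s = q \circ{}\!$. One then proves, by induction on the input length, that $q' \circ{}\! = \varphi_{q \circ{}\!}$. The decisive algebraic step is to show that the domain restriction propagates correctly along transitions, i.e. that for $s = q \circ{}\!$ and a kept letter $a$ one has $\inverse{s} \cdot (s \circ a) = \inverse{(s \cdot a)}$, whence $\inverse{s}s \cdot a = \inverse{(s \cdot a)}(s \cdot a)$; this guarantees that the fixed-point set governing the successor state $p' = (q \cdot a)'$ is exactly the one demanded by $\dom \varphi_{q \circ{}\!}$. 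This identity follows from the inverse-semigroup axioms together with the action identity $(st) \cdot a = [s \cdot (t \circ a)](t \cdot a)$: from $\inverse{s}s \circ a = a$ one first gets $\inverse{s} \circ (s \circ a) = a$, and then expanding $s\inverse{s}s \cdot a = s \cdot a$ and $\inverse{s}s\inverse{s} \cdot (s \circ a) = \inverse{s} \cdot (s \circ a)$ yields the two relations $p r p = p$ and $r p r = r$ for $p = s \cdot a$ and $r = \inverse{s} \cdot (s \circ a)$, so uniqueness of inverses gives $r = \inverse{(s \cdot a)}$.

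Once $q' \circ{}\! = \varphi_{q \circ{}\!}$ is in hand, invertibility of $\mathcal{T}'$ is routine: every reachable state acts as a one-to-one map $\varphi_t$, so its first-letter output map is injective and there is at most one transition per state and output letter. Thus $\mathcal{T}'$ is an $\inverse{S}$-automaton, and $\inverse{\mathscr{S}}(\mathcal{T}') = \mathscr{S}(\mathcal{T}' \cup \inverse{\mathcal{T}'})$ is generated by the $\varphi_{q \circ{}\!}$ together with their functional inverses. As $\varphi$ is a homomorphism, $\langle \varphi_{q \circ{}\!} \rangle = \varphi(S)$, and by \autoref{lem:PrestonVagner} the functional inverse of $\varphi_s$ is $\varphi_{\inverse{s}} \in \varphi(S)$; hence $\inverse{\mathscr{S}}(\mathcal{T}') = \varphi(S) \cong S$, which completes the proof. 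I expect the main obstacle to be the correctness argument of the preceding paragraph — matching the letter-by-letter domain restriction built into $\mathcal{T}'$ with the global Preston-Vagner domain — which rests entirely on the idempotent-compatibility identity $\inverse{s} \cdot (s \circ a) = \inverse{(s \cdot a)}$.
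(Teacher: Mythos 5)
Your proposal is correct and follows essentially the same route as the paper: restrict each state's transitions to the letters in $\dom\varphi_{q\circ{}\!}$ (your fixed-point condition $\inverse{s}s\circ a=a$ is the same set as the paper's $a\in\bm{\inverse{q}}\circ\Sigma$, by the computation inside \autoref{lem:PrestonVagner}), then prove $q'\circ{}\!=\varphi_{q\circ{}\!}$ by induction on the input length. Your key identity $\inverse{s}\cdot(s\circ a)=\inverse{(s\cdot a)}$, established via the relations $prp=p$, $rpr=r$ and uniqueness of inverses, is exactly the paper's step showing that $\wt{\bm{p}}=\bm{\inverse{q}}\cdot b$ and $p=q\cdot a$ are mutually inverse, so that the domain restriction propagates correctly to successor states.
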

    \begin{proof}
      The direction from right to left is easy. Let $S = \inverse{\mathscr{S}}(\mathcal{T})$ for some $\inverse{S}$-automaton $\mathcal{T}$. Then $S$ is an inverse semigroup. Additionally, we have $\mathscr{S}(\mathcal{T} \cup \inverse{\mathcal{T}}) = S$.

      For the other direction, let $S$ be an inverse automaton semigroup generated by an $S$-automaton $\mathcal{T} = (Q, \Sigma, \delta)$. We will change this $S$-automaton into an $\inverse{S}$-automaton which still generates $S$. In order to do that, let $Q' = \{ q' \mid q \in Q \}$ be a disjoint copy of $Q$. Furthermore, let $\delta' \subseteq Q' \times \Sigma \times \Sigma \times Q'$ contain a transition $\trans{q'}{a}{b}{p'}$ if $\delta$ contains the transition $\trans{q}{a}{b}{p}$ and, additionally, $a \in \bm{\inverse{q}} \circ \Sigma$ holds where $\bm{\inverse{q}} \in Q^+$ denotes the inverse of $q$ in $S$. Notice that the thus constructed automaton $\mathcal{T}' = (Q', \Sigma, \delta)$ is an $\inverse{S}$-automaton since the restriction of $q \circ{}\!$ to a partial function $\bm{\inverse{q}} \circ \Sigma \to_p q \circ \Sigma$ is one-to-one by \autoref{lem:PrestonVagner}.

      All which remains to be shown is that we have $\inverse{\mathscr{S}}(\mathcal{T}') = \mathscr{S}(\mathcal{T})$. Again, by \autoref{lem:PrestonVagner}, it suffices to show that $q' \circ{}\!: \Sigma^* \to_p \Sigma^*$ is equal to the restriction of $q \circ{}\!$ into a partial map $\bm{\inverse{q}} \circ \Sigma^* \to_p q \circ \Sigma^*$ for all $q \in Q$. It is clear that the values of both functions coincide on all words on which they are both defined. So, all we have to show is $\dom{q' \circ{}\!} = \bm{\inverse{q}} \circ \Sigma^*$. We do this by induction. Clearly, we have $\varepsilon \in \dom{q' \circ{}\!}, \bm{\inverse{q}} \circ \Sigma^*$.

      For the induction step, notice that $\bm{\inverse{q}} \circ \Sigma = \Sigma \cap \dom q' \circ{}\!$. If this set is empty, then we have $\dom q' \circ{}\! = \bm{\inverse{q}} \circ \Sigma^* = \{ \varepsilon \}$ because all mappings involved are prefix-compatible. Therefore, let $a \in \bm{\inverse{q}} \circ \Sigma = \Sigma \cap \dom q' \circ{}\!$ be arbitrary. Then, $q' \circ a$ is defined and equal to $b = q \circ a = q \inverse{\bm{q}} q \circ a$. In particular, $\inverse{\bm{q}} \circ b = \inverse{\bm{q}} \circ q \circ a$ must be defined and, because $q \circ{}\!$ is one-to-one on $\inverse{\bm{q}} \circ \Sigma$, it must be equal to $\inverse{\bm{q}} \circ b = a$. Let $p = q \cdot a$ be the successor of $q$ when reading an $a$ and let $\wt{\bm{p}} = \bm{\inverse{q}} \cdot b$. To show $\bm{\inverse{q}} \circ \Sigma^* = \dom q' \circ{}\!$, it remains to show $ax \in \bm{\inverse{q}} \circ \Sigma^* \iff ax \in \dom q' \circ{}\!$ for an arbitrary word $x$. By the construction of $\mathcal{T}'$, the right-hand side is equivalent to $x \in \dom p' \circ{}\! = \bm{\inverse{p}} \circ \Sigma^*$ where the equality can be assumed by induction. The left-hand side is equivalent to $x \in \wt{\bm{p}} \circ \Sigma^*$. Clearly, if there is some $y \in \Sigma^*$ with $x = \wt{\bm{p}} \circ y$ (i.\,e. if $x$ is in $\wt{\bm{p}} \circ \Sigma$), then $ax$ is in $\inverse{\bm{q}} \circ \Sigma^*$ because of $ax = (\inverse{\bm{q}} \circ b) (\wt{\bm{p}} \circ y) = \inverse{\bm{q}} \circ by$. For the other direction, if $ax$ is in $\inverse{\bm{q}} \circ \Sigma^*$, there are $c \in \Sigma$ and $z \in \Sigma^*$ with $ax = \inverse{\bm{q}} \circ cz = \inverse{\bm{q}} q \inverse{\bm{q}} \circ cz = \inverse{\bm{q}} q \circ ax = a (\wt{\bm{p}} p \circ x)$. This implies $x = \wt{\bm{p}} p \circ x \in \wt{\bm{p}} \circ \Sigma^*$.
      
      Thus, we are done if we show $\bm{\inverse{p}} \circ{}\! = \wt{\bm{p}} \circ{}\!$. Let $u$ be an arbitrary word over $\Sigma$. Then, we have
      \[
        (q \circ a) (p \circ u) = q \circ au = q \bm{\inverse{q}} q \circ au = (q \bm{\inverse{q}} q \circ a) (p \wt{\bm{p}} p \circ u)
      \]
      where equality also means that the left-hand side is defined if and only if so is the right-hand side. Because all mappings involved are length-preserving, we have $p \circ u = p \wt{\bm{p}} p \circ u$ for all $u$. Notice that $q \circ a$ must be defined since otherwise $a = \bm{\inverse{q}} \circ b = \bm{\inverse{q}} q \bm{\inverse{q}} \circ b = \bm{\inverse{q}} q \circ a$ would be undefined. On the other hand, we also have
      \[
        (\bm{\inverse{q}} \circ b) (\wt{\bm{p}} \circ v) = \bm{\inverse{q}} \circ bv = \bm{\inverse{q}} q \bm{\inverse{q}} \circ bv = (\bm{\inverse{q}} q \bm{\inverse{q}} \circ b) (\wt{\bm{p}} p \wt{\bm{p}} \circ v)
      \]
      and, therefore, $\wt{\bm{p}} \circ v = \wt{\bm{p}} p \wt{\bm{p}} \circ v$ for all words $v$. Thus, $\wt{\bm{p}}$ and $p$ are mutually inverse and, because the inverse of $p$ must be unique, we have $\bm{\inverse{p}} \circ{}\! = \wt{\bm{p}} \circ{}\!$.
    \end{proof}
  \end{section}

\bibliographystyle{plain}
\bibliography{references}

\end{document}